%
%
%

\documentclass{svproc}
%
%

\usepackage{url}

\usepackage{dsfont}
\usepackage{amsmath}
\usepackage{color,fancybox,graphicx,url}
\usepackage{float}
\usepackage[caption = false]{subfig}

\newtheorem{prop}{Proposition}
\newtheorem{assumption}{Assumption}


\newcommand{\UCdp}{U_{C,p}}
\newcommand{\UCdc}{U_{C,c}}
\newcommand{\UCda}{U_{0}}

\newcommand{\xs}{{x^*}}


\newcommand{\Bsbl}{\Big [ }
\newcommand{\Bsbr}{\Big ] }

\newcommand{\Bsbrr}{\Big ) }

\newcommand{\cb}[1]{\{#1\}}


\newcommand{\muv}{\mu}
\newcommand{\muc}{\Lambda}

\newcommand{\mucs}{\Lambda^*}

\newcommand{\Bf}[2]{q(\xs(#1)|#1)p(\xs(#1)|#2)}

\newcommand{\threshold}{\mu_0}

\newcommand{\Cc}{\Icen}

\newcommand{\Icen}{I_0}

\newcommand{\possibleStrategy}{\mathcal{S}}

\newcommand{\coreBudget}{M_c}
\newcommand{\leafBudget}{M_p}

\newcommand{\alternative}[1]{r_0 \Bsbl  B_0 e^{- \frac{\alpha}{#1}}-c \Bsbr I(#1)}
\newcommand{\firstOrderAlternative}[1]{\frac{\alpha}{#1^2}e^{ - \frac{\alpha}{#1}}C_0}

\newcommand{\leafAllocation}[1]{\mu_p(#1)}

\newcommand{\coreInteract}[1]{\mu_c(#1)}
\newcommand{\coreInteractOpt}[1]{\mu^*_c(#1)}
\newcommand{\coreInteractVec}{\mu_c}
\newcommand{\coreInteractVecOpt}{\mu^*_c}

\newcommand{\coreUtility}[1]{S(y_c|#1)}
\newcommand{\leafOptUtility}[1]{U_p(\leafRateVec{#1}|\coreRateVec,#1)}

\newcommand{\delayCoreUtility}[2]{r_p \Bsbl p(#1|#2) e^{ - \alpha \left ( \frac{1}{\mu_c(#1)} + \frac{1}{\mu(y_c|#2)} \right)} -  c  \Bsbr I(\mu_c(#1))I(\mu(y_c|#2))}

\newcommand{\delayCoreUtilityR}[2]{r_p \left [ p(z|y) e^{ - \alpha \left ( \frac{1}{#1} + \frac{1}{#2} \right)} -  c  \right ] I(#1)I(#2)}

\newcommand{\delayUtility}[2]{r_p \Bsbl p(#1|#2) e^{ - \frac{\alpha}{\mu(#1|#2)}} -   c\Bsbr I(\mu(#1|#2))}


\newcommand{\optcore}[1]{OPT(\coreRateVec|#1)}
\newcommand{\optperiphery}[2]{OPT(\leafAllocation{#2}|#1,#2)}

\newcommand{\leafSet}{C_p}
\newcommand{\core}{y_c}

\newcommand{\comSet}{C}
\newcommand{\coreRateVec}{\coreInteractVec}

\newcommand{\leafRateVec}[1]{\muv_p(#1)}

\newcommand{\Clw}[1]{\leafSet \backslash\{#1\}}
\newcommand{\Cw}[1]{\comSet \backslash\{#1\}}
\DeclareMathOperator*{\argmax}{argmax}

\newcommand{\strategySpace}{S}

\newcommand{\allstrategySpace}{\possibleStrategy_0}

\begin{document}
\mainmatter              
\title{Structure of Core-Periphery Communities}
\titlerunning{Structure of Core-Periphery Communities}  
%
\author{Junwei Su\inst{1} \and Peter Marbach\inst{1}}
%
%
%
\institute{
Dept. of Computer Science, University of Toronto, Canada}

\maketitle              

\begin{abstract}
It has been experimentally shown that communities in social networks tend to have a core-periphery topology. However, there is still a limited understanding of the precise structure of core-periphery communities in social networks including the connectivity structure and interaction rates between agents. In this paper, we use a game-theoretic approach to derive a more precise characterization of the structure of core-periphery communities.

\end{abstract}

\section{Introduction}\label{sec:intro}

Experimental results have shown that communities in social networks tend to have a core-periphery topology consisting of  two types of agents, core agents and periphery agents, that differ in their objectives for participating in the community~\cite{core_periphery_community_experiment,vlog}.
The objective of periphery agents is to obtain content that is of interest to them. As a result, periphery agents follow other agents in the community to obtain the content that is of most interest to them. 
The objective of the core agents is to attract followers, and attention, from the periphery agents. To achieve their objective, core agents aggregate/collect content from the community and make it available to the periphery agents~\cite{media_aggregate,jordan2012lattice,link_aggregator}.
These two different objectives lead to a community structure where the core agents follow periphery agents in the community in order to collect content,  and the periphery agents connect with the core agents and other periphery agents in order to obtain the content they are interested in~\cite{core_periphery_community_experiment,vlog}.

In this paper, we provide a mathematical model that allows us to derive these structural properties of core-periphery communities in social networks in a formal manner.
The results of our analysis provide a precise characterization of the connectivity structure, and interactions rates, of core-periphery communities.

For our analysis,  we use a game-theoretic framework where we assume that agents in the community make the decision on which other agents to interact with in a manner that maximizes their own objective. The proofs for all the results are provided in the appendix.

\section{Related Work}\label{sec:related_work}



Experimental studies have shown that communities in social networks tend to have a core-periphery topology with two types of agents, core agents and periphery agents~\cite{core_periphery_community_experiment,vlog}, where the core agents collect (aggregate) content from the community, and make it available to the periphery agents.  While experimental studies show that this structure exists, they do not provide a (precise) characterization of the connectivity structure, as well as the interaction rates between the different agents. The goal of this paper is to provide such a characterization. An interesting result from the experimental studies is that core-periphery communities in online networks tend to have a small set of core agents, typically in the order of 1-6 core agents~\cite{core_periphery_community_experiment,vlog}.


Theoretical results on the structural properties of communities were obtained in the context of network formation games\cite{network_formation_survey,free_trade_network,market_share_network,social_econ_game,directed_network_formation,airline_network}.
Jackson and Wolinsky presented one of the first, and most influential, analysis of network formation games~\cite{social_econ_game}. For their analysis,  Jackson and Wolinsky assume that a) agents in the network obtain a benefit from having paths to other agents and b) pay a cost for each direct connection (link) that they have with another agent. The benefit that an agent obtains from another agent is discounted by a factor $\delta^{d}$, where $d$ is the length of the path (distance) between the two agents and $\delta$, $0 < \delta < 1$, is a discount factor.
Assuming bi-directional links, Jackson and Wolinsky show that the star topology is a Nash equilibrium for the game that they consider. The paper by Jackson and Wolinsky makes several important contributions. First, it shows that a game-theoretic model can be used to derive the structural properties of communities. Second, the star-topology of the Nash equilibrium suggests that a  core-periphery topology might indeed naturally emerge as the community topology in social networks. 

Bala and Goyal use in~\cite{directed_network_formation} the model of Jackson and Wolinsky, except that they consider unidirectional links instead of bidirectional links.   For this model, Bala and Goyal show that the star topology again emerges as a Nash equilibrium, and side payments from the periphery agents to the core agent are required for the star topology to emerge as a Nash equilibrium. 

A limitation of the analysis by Jackson and Wolinsky is they assume a homogeneous set of periphery agents. Hegde et al. consider in~\cite{flow_game} a more general model that allows for a heterogeneous population of periphery agents where periphery agents differ in the benefit they obtain from other agents. 
To model the heterogeneous population, Hedge et al. embed agents in a Euclidean space. Agents that are close (in the Euclidean distance) to a given agent provide a higher benefit to the agent compared with agents that are further away.
Assuming that all agents have the same number of connections, Hegde et al. consider the game where agents choose connections to other agents in order to maximize their own benefit. For this model, Hedge et al. show that there exists a Nash equilibrium. However, due to the complexity of the model, Hedge et al. were not able to derive and characterize the structural properties of the Nash equilibrium.

In summary, existing mathematical models are either too simple (as it is the case for~\cite{social_econ_game,directed_network_formation}) and lead to a core-periphery  community structure that does  not accurately reflect the community structures observed in real-life social networks; or they are too complex and can not be used to derive the structural properties of core-periphery communities (as it is  the case for~\cite{flow_game}).
The goal of this paper is to propose a model that is simple enough to characterize the structural properties of a community, and yet is complex enough to lead to results that accurately reflect the community structures that are observed in real-life social networks and can be used to design algorithms for social networks.

\section{Core-Periphery Community}\label{sec:model}
We use the following model for our analysis.

\paragraph{\bf Core-Periphery Community $\comSet$:}
A core-periphery community consists of a set of core agents and periphery agents. To simplify the notation and analysis, we assume that there exists a single core agent $y_c$. This assumption is also motivated by the experimental results which show that core-periphery communities tend to have a small set of core agents, typically in the order of 1-6 core agents~\cite{core_periphery_community_experiment,vlog}. The results that we obtain for a single core agent can be extended to the case of multiple core-agents.
Using this assumption, a core-periphery community $\comSet$ is then given by  a core agent $\core$ and a  set $\leafSet$ of periphery agents, i.e. we have that
$\comSet = \leafSet \cup \{\core \}$.

\paragraph{\bf Periphery Agents $\leafSet$:}
For our analysis, we assume that periphery agents both produce and consume content. In addition, we assume a heterogeneous set of periphery agents, where agents differ in the content (topics) that they are interested in. To model this situation we use a similar approach as in~\cite{flow_game}, and assume a ``topic space'' that specifies how closely two topics are related with each other. The topic space that we consider is given by the interval $I_C = [\Icen - L_C, \Icen + L_C] \subset R$. Each periphery agent is then characterized by its main interest $y \in I_C$, which is the topic that the agent is most interested in. For the content production, we assume that each agent produces content on the topic that is their main interest. For the content assumption, we use the following model. 
The probability that a periphery agent with main interest $y$ is interested in a content produced by an agent with main interest $x$ is given by
	\begin{equation}\label{eq:p(x|y)}
		p(x|y) = f(||x-y||), \qquad x,y \in I_C,
	\end{equation}
where $f:[0,\infty) \mapsto [0,1]$ is a decreasing concave function. 
Note that this definition implies that periphery agents are more interested in content that is produced by agents whose main interest is close to their own main interest.
		
For our analysis we assume that the (main interests of the) periphery agents are "uniformly" distributed in the interval $I_C$, with equal distance $\delta$ between two agents. That is, we assume that the set of periphery agents $C_l$ consists of $K$ agents with main interests $y_k$, $k=1,...,K$, given by
	$$\leafSet = \{y_1,...,y_{K}\} \subset I_C = [\Icen - L_C, \Icen + L_C],$$
	with
	$y_1 = \Icen - L_C$
	and
	$   y_{k+1} = y_k + \delta$, $k=1,...,K-1$,
	where
	$\delta  = \frac{2L_C}{K-1}$.
In the following we identify periphery agents  by their main interest $y$.



\section{Utility of Periphery Agents}\label{sec:utility}
For our analysis, we assume that periphery agents can obtain content from three different sources: a) directly from other periphery agents by following these agents, b) indirectly from the core agent, where the content provided by the core agent is the content that the core agent obtains by following periphery agents in the community, and c) by following content platforms outside the community.~\footnote{For example, users on Twitter will generally also get content from additional content platforms such as other news or other social media sites.}

We use the following notation to characterize the following rates between the agents, and the following rates of periphery agents to content platforms outside the community.

Let $\coreInteract{y}$ be the rate with which core agent $y_c$ follows periphery agent $y \in \leafSet$, and let 
$\coreRateVec = ( \coreInteract{y} )_{y \in \leafSet}$
be following rate vector  of the core agent $y_c$ to all periphery agents $y \in \leafSet$. 

Similarly, let
$\mu(y) = ( \mu(z|y) )_{z \in \comSet \backslash{\{y\}}}$
be the following rate vector  of periphery agent $y \in \leafSet$ to all other agents $z \in \comSet\backslash\{y\}$ in the community.
Furthermore, let  $\lambda(y)$ be the rate with which periphery agent $y$ follows  content platforms outside the community, and let
$\mu_p(y) = (\mu(y),\lambda(y))$
be the overall following rate vector of  periphery agent $y \in \leafSet$.

Finally, let
$ \muc_p  =  ( \leafRateVec{y} )_{y \in \leafSet}$
be the following rate vectors of all periphery agents.

We next define the utilities that periphery agents obtain from following a) other periphery agents directly, b) the core agent, and c) content platforms outside the community. 

\paragraph{\bf Utility from Following a Periphery Agent Directly:}
We first define the utility that a periphery agent $y$ obtains by following another periphery agent $z$ with rate $\mu(z|y)$. 
Suppose that agent $y$ receives a reward of value 1 for each content item that is of interest to agent $y$. Furthermore, suppose  that each content item that agent $y$ receives incurs a processing (reading) cost $c$, $0 < c < 1$. If agent $y$ receives content from agent $z$ with delay $d(z|y)$, then the (expected) utility rate of agent $y$ is given by
\begin{equation}\label{eq:UCd_delay2}
\begin{split}
\UCdp(z|y) = r_p \Bsbl p(z|y) e^{ - \alpha d(z|y)} -   c\Bsbr I(\mu(z|y)),
\end{split}
\end{equation}
where $p(z|y)$ is the probability that a content item of agent $z$ is of interest to agent $y$, $I(\mu(z|y))$ is the indicator function of whether agent $y$ follows agent $z$ and is equal to 1 if $\mu(z|y) > 0$,  $r_p$ is the rate at which $z$ produces content, and $\alpha$ is a given constant that captures how sensitive the content produced by agent $z$ is towards delay. This utility function captures the intuition that the longer the delay $d(z|y)$ is, the lower is the utility of the received content.

For our analysis we define the delay $d(z|y)$ by
\begin{equation*}\label{eq:delay}
    d(z|y) = \frac{1}{\mu(z|y)},
\end{equation*}
where $\mu(z|y)$ is the rate with which agent $y$ follows agent $z$. 
Note that this definition implies that the higher the rate with which  agent $y$ follows agent $z$, the lower the delay $d(z|y)$ will be.

\paragraph{\bf Utility from Following the Core Agent:}
We next define the utility that a periphery agent $y \in \leafSet$ receives from content of periphery agent $z \in \leafSet$, when the content is received through the core agent $y_c$. For this, suppose that the core agent $y_c$ follows periphery agent $z$ with rate $\mu_c(z)$, and periphery agent $y$ follows the core agent $y_c$ with rate $\mu(y_c|y)$. The total delay with which agent $y$ receives content from agent $z$ through $y_c$ is given by
$$d(z|y_c) + d(y_c|y) = \frac{1}{\coreInteract{z}} +  \frac{1}{\mu(y_c|y)}.$$
Using this result, the utility rate of periphery agent $y$ for getting the content of agent $z$ via the core agent $y_c$ is given by
\begin{equation}\label{eq:utility_core}
  \UCdc(z|y) = \delayCoreUtility{z}{y}.
\end{equation}  

\paragraph{\bf Utility from Following other Content Platforms:}
Finally we define the utility that a periphery agent $y \in \leafSet$ obtains by  getting content from other platforms. For this, we assume that  the overall rate (over all content platforms) at which new content items are generated by the other platforms is equal to $r_0 > 0$, and that each content item is of interest to agent $y$ with probability $B_0$. If periphery agent $y$ follows other content platforms with rate $\lambda(y)$, then the corresponding utility rate is given by 
\begin{equation}\label{eq:utility_other}
  \UCda(y) = \alternative{\lambda(y)}.
\end{equation}



\section{Agents' Decisions and  Interactions}\label{sec:problem_formulation}
In this section, we model the interaction among agents in a core-periphery community where we assume that each agent decides on its following rates in order to maximize its own objective function.

\subsection{Core Agent's Decision Problem}\label{ssec:core_agent_model}
Recall from Section~\ref{sec:intro} that the objective of the core agent $\core$ is to attract attention from periphery agents by aggregating/collecting content that is of most interest to the periphery agents~\cite{media_aggregate,jordan2012lattice,link_aggregator}. We formulate the resulting decision problem of the core agent as an optimization problem as follows.

Recall that  $\muc_p$ is the rate allocation vector over the all periphery agents $y \in \leafSet$, and  $\coreRateVec = ( \coreInteract{y} )_{y \in \leafSet}$ is the rate allocation of the core agent $y_c$. Furthermore recall Eq.~\eqref{eq:utility_core} that defines the utility $\UCdc(z|y)$ that periphery agent $y$ obtains from getting content of agent $z$ through the core agent $y_c$. For a given rate allocation  $\muc_p$ of the periphery agents, the decision problem of the core agent $\core$ is given by the following optimization problem  $\optcore{\muc_p}$,

\begin{equation}\label{eq:core_opt}
	\begin{aligned}
		& \underset{\coreRateVec }{\text{maximize}} &&\sum_{y \in \leafSet} \sum_{z \in \Clw{y}} \UCdc(z|y) \\
		& \text{subject to} && \sum_{y \in \leafSet} \coreInteract{y} \leq \coreBudget, \\
		&&& \coreInteract{y} \geq 0, \; y \in C_p,
	\end{aligned}
\end{equation}
where $\coreBudget$ is a constraint on the total rate that the core agent can allocate to follow periphery agents $y \in \leafSet$. This constraint reflects  that the core agent $\core$ has limited resources (time) to follow periphery agents in the community. Note that the optimization problem $\optcore{\muc_p}$ captures the goal of the core agent: the core agent $\core$  wants to use its limited resources to attract attention from the periphery agents by aggregating content that is of most interest to the periphery agents.

\subsection{Periphery Agents' Decision Problem}\label{ssec:periphery_agent_model}
Recall that the objective of a periphery agent $y$ is to obtain as "much content that is of interest as possible". A periphery agent can achieve this goal by following other periphery agents directly, by getting content through the core agent $y_c$, and by getting content from other content platforms. We formulate the resulting decision problem of a periphery agent as follows.

Let $\coreRateVec = ( \coreInteract{y} )_{y \in \leafSet}$ be a given rate allocation of the core agent $y_c$, and let
\begin{equation}\label{eq:periphery_utility}
	\begin{aligned}
		&   \leafOptUtility{y}
		=  
		\sum_{z \in \Clw{y}} \UCdc(z|y) 
		+ \sum_{z \in \Clw{y}} \UCdp(z|y) 
		+ \UCda(y) 
	\end{aligned}
\end{equation}
be the total utility rate that periphery agent $y$ obtains under its rate allocation $\leafRateVec{y}$ and the allocation $\coreRateVec$  of the core agent. For a given rate allocation $\coreRateVec$ of the the core agent,  the decision problem of the periphery agent $y$ is given by the following optimization problem
$\optperiphery{\coreRateVec}{y}$,
\begin{equation}\label{eq:periphery_opt}
	\begin{aligned}
		& \underset{\mu_p(y)}{\text{maximize}}
		& & \leafOptUtility{y}\\
		& \text{subject to} &&\mu(\core|y) + \lambda(y) + \sum_{z \in \Clw{y}} \mu(z|y) \leq \leafBudget, \\
		&&& \mu(z|y), \lambda(y), \mu(\core|y) \geq 0, \; z \in C_p \backslash \cb{y},
	\end{aligned}
\end{equation}
where $\leafBudget>0$ is a constraint on the total rate that periphery  agent $y$ can allocate. To simplify the notation and analysis, we assume that the rate budget $\leafBudget$ is the same for all periphery agents.


\subsection{Nash Equilibrium}\label{ssec:stable}
The optimal solution of the maximization problem  $\optcore{\muc_p}$ of the core agent depends on the given rate allocation  $\muc_p$ of the periphery agents. Similarly, the optimal solution of the maximization problem $\optperiphery{\coreRateVec}{y}$ of periphery agent $y$ depends on the given rate allocation  $\coreInteractVec$ of the core agent. This coupling creates a strategic interaction (game) between the agents in the community. A Nash equilibrium for the resulting game is given as follows.

Let $\muc = (\coreInteractVec,\muc_p)$ be the rate allocation vector that characterizes the rate allocation $\coreInteractVec$ of the core agent, and  the rate allocation vector $\muc_p =  ( \leafRateVec{y} )_{y \in \leafSet}$ over all periphery agents. 
\begin{definition}\label{def:stable_allocation}
	An allocation $\muc^*  = (\coreInteractVecOpt,\muc^*_p)$ is a Nash equilibrium if we have that
	$$\coreInteractVecOpt = \argmax_{\coreRateVec \geq 0}\optcore{\muc^*_p}
	\quad \mbox{ and } \quad
	\mu_p^*(y) = \argmax_{\leafRateVec{y} \geq 0}\optperiphery{\coreInteractVecOpt}{y}.$$
\end{definition}
Definition~\ref{def:stable_allocation} states that under a Nash equilibrium $\muc^* = (\coreInteractVecOpt,\muc^*_p)$ no agent is able to increase the value of their objective function by unilaterally changing their allocation.
In Section~\ref{ssec:nash_equilibrium} we show that there exists a unique Nash equilibrium, and  in Section~\ref{sec:structural_property} we characterize the structural properties of the Nash equilibrium.

\subsection{Existence of Unique Nash Equilibrium}\label{ssec:nash_equilibrium}

For our analysis  we make the following assumptions. 
\begin{assumption}\label{ass:positive_utility}
	For all periphery agents $y \in \leafSet$ we have that
	$$\sum_{z \in \leafSet \backslash \{y\}} \left [ p(z|y)  - c  \right ] > 0
	\qquad \mbox{and} \qquad 
	\sum_{z \in \leafSet \backslash \{y\}} \left [ p(y|z)  - c  \right ] > 0.$$
\end{assumption}

Assumption~\ref{ass:positive_utility} states that if the content of agent $y$ is received by all other agents $z \in C_p\backslash\{y\}$ without delay, then the resulting total utility is positive. 
Similarly, if agent $y$ receives content from all other agents  $z \in C_p\backslash\{y\}$ without delay, then the resulting total utility $y$ is positive.


In addition we make the following assumption for the processing cost $c$. 
\begin{assumption}\label{ass:concave_assumption}
	We have that
	$ c > e^{- 1 }$.
\end{assumption}
Assumption~\ref{ass:concave_assumption} implies that  if agent $y$ follows agent $z$ with rate $\mu(z|y) < \alpha$, then the utility from  content received through agent $z$ will be negative. As a result we have that if agent $y$ follows agent $z$ with a positive rate $\mu(z|y)>0$, then we have that 
$\mu(z|y) > \alpha$.
We then obtain the following results.
\begin{prop}\label{prop:stable_allocation}
	There exists  a unique Nash equilibrium  $\muc^*  = (\coreInteractVecOpt,\muc^*_p)$.
\end{prop}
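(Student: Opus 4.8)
The plan is to establish existence and uniqueness separately, exploiting the structure of the two coupled optimization problems. First, I would analyze the periphery agents' best response. For a fixed core allocation $\coreRateVec$, the objective $\leafOptUtility{y}$ in Eq.~\eqref{eq:periphery_utility} decomposes additively over the decision variables $\mu(z|y)$ for $z \in \Clw{y}$, $\mu(\core|y)$, and $\lambda(y)$: the term $\UCdc(z|y)$ depends on $\mu(\core|y)$ (and the fixed $\coreInteract{z}$), the term $\UCdp(z|y)$ depends only on $\mu(z|y)$, and $\UCda(y)$ depends only on $\lambda(y)$. Each summand, on the region where the relevant rate exceeds $\alpha$ (forced by Assumption~\ref{ass:concave_assumption}, so the indicator is effectively $1$ wherever the rate is positive and the contribution is not dominated), is of the form $a e^{-\alpha/\mu} - b\mu$-type after accounting for the budget multiplier — more precisely each is strictly concave and increasing in its own rate variable on the relevant range. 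Hence $\optperiphery{\coreRateVec}{y}$ is (on the effective domain) a concave maximization over a simplex-type constraint, so its solution is unique; I would package this as a lemma giving a well-defined, single-valued best-response map $\coreRateVec \mapsto \mu_p^*(y)$, and note it depends continuously on $\coreRateVec$.

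Second, I would do the same for the core agent: for fixed $\muc_p$, the objective $\sum_{y}\sum_{z}\UCdc(z|y) = \sum_{z}\coreInteract{z}$-dependent terms where, collecting all periphery agents $y$ that receive $z$'s content through the core, each $\UCdc(z|y)$ is, as a function of $\coreInteract{z}$ alone (everything else fixed), of the concave form $r_p[p(z|y)e^{-\alpha(1/\coreInteract{z} + 1/\mu(\core|y))} - c]I(\cdot)$. Summing over $y$ keeps it concave in $\coreInteract{z}$, and the objective separates across $z$ up to the single shared budget constraint $\sum_z \coreInteract{z} \le \coreBudget$. So $\optcore{\muc_p}$ is a concave program with a unique maximizer, giving a single-valued continuous best-response map for the core. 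Assumption~\ref{ass:positive_utility} is what guarantees these maximizers are nontrivial (positive), so the best responses land in a compact set bounded by the budgets $\coreBudget$, $\leafBudget$.

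For existence, I would then apply Brouwer's (or Kakutani's, though single-valuedness makes Brouwer enough) fixed-point theorem to the joint best-response map on the compact convex set $\{(\coreRateVec,\muc_p): \sum_y\coreInteract{y}\le\coreBudget,\ \sum \mu(z|y)+\lambda(y)+\mu(\core|y)\le\leafBudget,\ \text{all}\ \ge 0\}$; continuity of the best responses (from strict concavity plus the maximum theorem) yields a fixed point, which is by definition a Nash equilibrium. For uniqueness, the cleanest route is to observe the game has a special structure: the periphery agents' payoffs do not depend on each other (only on the core's allocation), and the core's payoff depends on the aggregate periphery behavior — it is essentially a Stackelberg-like / potential structure. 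Concretely, I would substitute the (unique) periphery best response $\mu_p^*(\coreRateVec)$ into the core's problem, obtaining a reduced objective $\Phi(\coreRateVec)$ purely in the core's variables, and argue $\Phi$ is strictly concave (the composition preserves concavity because $\mu(\core|y)^*$ is itself a concave, monotone function of the slack left after the direct-following terms, and the delay exponential is jointly concave in the two rates on the region $\mu > \alpha$). Strict concavity of $\Phi$ on the core's budget simplex forces a unique $\coreInteractVecOpt$, and then uniqueness of each periphery best response forces a unique $\muc^*_p$.

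The main obstacle I anticipate is the uniqueness argument, specifically verifying that the reduced core objective $\Phi$ (after plugging in periphery best responses) is genuinely strictly concave — this requires controlling how the periphery agents reallocate between the direct link to the core and their direct peer links as $\coreRateVec$ varies, and in particular handling the non-smoothness introduced by the indicator functions $I(\cdot)$ at the threshold $\alpha$ (Assumption~\ref{ass:concave_assumption} is designed to make these "on/off" rather than continuous transitions, which helps, but the set of active links could in principle change). I would handle this by first characterizing, via KKT conditions, which links are active at any best response and showing this active set is locally constant except on a measure-zero boundary, then arguing strict concavity on each piece and continuity across pieces. An alternative, possibly cleaner, uniqueness route would be a diagonal-strict-concavity (Rosen) argument applied directly to the full vector of players, checking that the pseudo-gradient of the stacked payoffs is strictly monotone; given the near-decoupled structure this monotonicity should reduce to the one-variable strict concavity facts already established.
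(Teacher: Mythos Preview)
Your approach is reasonable in outline but misses the structural observation that makes the paper's proof short: the game is an \emph{exact potential game}. The core agent's objective in $\optcore{\muc_p}$ is precisely $\sum_{y}\sum_{z}\UCdc(z|y)$, and these same terms appear additively in each periphery agent's utility $\leafOptUtility{y}$. Hence the function
\[
G(\muc) = \sum_{y\in\leafSet} \leafOptUtility{y}
\]
is an exact potential: a unilateral change by the core affects only the $\UCdc$ block of $G$, which equals the core's own payoff change, and a unilateral change by periphery agent $y$ affects only the $y$-indexed summand, which is $y$'s own payoff. The paper then verifies that $G$ is strictly concave on the (restricted) strategy space---this reduces to checking that $e^{-\alpha/\mu}$ is strictly concave for $\mu>\alpha$ (one variable) and that $e^{-\alpha/\mu_1-\alpha/\mu_2}$ has negative-definite Hessian for $\mu_1,\mu_2>\alpha$ (two variables), both of which hold under Assumption~\ref{ass:concave_assumption}---and invokes the standard result that a strictly concave potential on a compact convex set yields a unique Nash equilibrium.

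Relative to this, your proposal has a genuine gap in the uniqueness step. Substituting the periphery best responses into the core's objective to form a reduced function $\Phi(\coreRateVec)$ and claiming $\Phi$ is strictly concave is not justified by what you wrote: the best-response map $\coreRateVec\mapsto\mu^*(\core|y)$ is not affine, and composing a jointly concave function with a nonlinear inner map does not in general preserve concavity (your parenthetical ``the composition preserves concavity because $\mu(\core|y)^*$ is itself concave, monotone\ldots'' is exactly the sort of statement that fails without further structure). Your fallback to Rosen's diagonal strict concavity would work, but verifying strict monotonicity of the pseudo-gradient here amounts to essentially the same Hessian computations the paper does for $G$---and once you have those, the potential route is both shorter and avoids the active-set/indicator bookkeeping you flag as an obstacle. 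In short: your existence argument is fine, but for uniqueness you should recognize and exploit the potential structure rather than pushing through a reduced-objective or DSC argument.
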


\section{Structural Properties of Core-Periphery Communities}\label{sec:structural_property}
In this section, we derive the structural properties of a core-periphery community at the Nash equilibrium  $\muc^*  = (\coreInteractVecOpt,\muc^*_p)$.  

\subsection{Condition for Core-Periphery Communities to Emerge}
We first characterize how the rate budget $\coreBudget$ of the core agent, and the rate budgets $\leafBudget$ of the periphery agents, impact the structural properties of the Nash equilibrium. 
We have the following result. 
\begin{prop}\label{prop:sufficient_budget}
	There exists constants $m_c$ and $m_p$  such that if for the rate budget $\coreBudget$ of the core agent and the rate budget $\leafBudget$ of the periphery agents we have that 
	$$\coreBudget > m_c 
	\quad \mbox{ and } \quad 
	\leafBudget > m_p,$$
	then the following is true for the resulting  Nash equilibrium $\muc^* = (\coreInteractVecOpt,\muc^*_p)$.
	For all periphery agents $y \in \leafSet$ we have that
	$$ \coreInteractOpt{y} > 0
	\quad \mbox{ and } \quad
	\mu^*(y_c|y) > 0.$$
\end{prop}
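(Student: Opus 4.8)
The plan is to argue directly about the unique Nash equilibrium $\muc^* = (\coreInteractVecOpt, \muc^*_p)$ supplied by Proposition~\ref{prop:stable_allocation}, showing that once $\coreBudget$ and $\leafBudget$ are large the support of $\coreInteractVecOpt$ is all of $\leafSet$ and $\mu^*(\core|y) > 0$ for every $y \in \leafSet$. The engine is a marginal (KKT) analysis of the two decision problems \eqref{eq:core_opt} and \eqref{eq:periphery_opt}: at an optimum the marginal utility of every rate that is used equals the shadow price of the budget constraint, whereas turning on a link that is currently off requires raising its rate from $0$ past the point where the bracketed term of the corresponding utility becomes positive. By Assumption~\ref{ass:concave_assumption} this ``activation rate'' exceeds $\alpha$, and for a periphery link it is at least $\alpha/\ln(p(z|y)/c)$ when $p(z|y) > c$, with analogous thresholds for the core link and the outside link. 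Since $\tfrac{d}{d\mu}e^{-\alpha/\mu} = \tfrac{\alpha}{\mu^2}e^{-\alpha/\mu}$ and the budget is always binding (by Assumption~\ref{ass:positive_utility} some $z$ with $p(z|y)>c$ always exists, so agent $y$'s largest active rate is at least $\leafBudget/(K+1)$), the shadow price of $\leafRateVec{y}$ is $O(1/\leafBudget^2)$, and similarly that of $\coreInteractVecOpt$ is $O(1/\coreBudget^2)$. Hence, for large budgets, the opportunity cost of activating any bounded number of worthwhile links is negligible next to the fixed positive payoff each such link contributes.

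The first step is to pick part of $m_p$ large enough that, for $\leafBudget$ above it, at the equilibrium every periphery agent $y$ follows the outside platforms and every other periphery agent $z$ with $p(z|y) > c$: each such link contributes a fixed positive amount (at least, say, $r_p(\sqrt{c\,p(z|y)} - c)$ at rate $2\alpha/\ln(p(z|y)/c)$, and similarly for $\lambda(y)$), which for large $\leafBudget$ dominates the shadow price times the activation rate; this step does not involve the core. The delicate part is the two ``core'' links. Following the core is worthwhile for $y$ only if the core already follows a large set $Z_c = \{z : \coreInteractOpt{z}>0\}$ at large rates, since the value of that link is $\sum_{z \in Z_c \setminus\{y\}}\big[p(z|y)e^{-\alpha/\coreInteractOpt{z}}e^{-\alpha/\mu(\core|y)} - c\big]$; conversely the marginal contribution of $\coreInteract{z}$ to \eqref{eq:core_opt} is governed by $\sum_{y \in Y^+ \setminus\{z\}} p(z|y)e^{-\alpha/\mu^*(\core|y)}$ against $c\,|Y^+\setminus\{z\}|$ with $Y^+ = \{y : \mu^*(\core|y)>0\}$, so the core follows $z$ only if a large set $Y^+$ follows the core at large rates. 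Breaking this mutual dependence is the \emph{main obstacle}.

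I would resolve it with a self-map / fixed-point argument. Let $\mathcal D$ be the (convex, compact) set of allocations in which the core follows every periphery agent, and every periphery agent follows the core, every other agent $z$ with $p(z|y)>c$, and the outside platforms, each with rate above the floor that the KKT equations force given shadow prices $O(1/\coreBudget^2)$, respectively $O(1/\leafBudget^2)$. Using the first step and the marginal estimates above, one shows that for suitable constants $m_c$ and $m_p$ and all $\coreBudget > m_c$, $\leafBudget > m_p$ the best-response map sends $\mathcal D$ into itself; this is exactly where Assumption~\ref{ass:positive_utility} enters, because when everyone is ``connected'' and all delay factors are close to $1$, relaying the whole community's content through the core is worthwhile to the core since $\sum_{z\ne y}[p(y|z)-c]>0$, and receiving it through the core is worthwhile to $y$ since $\sum_{z\ne y}[p(z|y)-c]>0$. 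Brouwer's theorem (the best-response map being single-valued and continuous on $\mathcal D$, since the relevant problems are strictly concave in their variables there) then produces a Nash equilibrium inside $\mathcal D$, which by the uniqueness in Proposition~\ref{prop:stable_allocation} must equal $\muc^*$; since every allocation in $\mathcal D$ satisfies $\coreInteractOpt{y}>0$ and $\mu^*(\core|y)>0$ for all $y$, the proposition follows. Apart from the self-map verification — uniform lower bounds on per-link payoffs (from the fixed $p$-values, Assumption~\ref{ass:positive_utility}, and the $\mu>\alpha$ floor of Assumption~\ref{ass:concave_assumption}) together with uniform upper bounds on the two shadow prices — one subtlety to track is the degenerate configuration in which no periphery agent follows the core, which makes \eqref{eq:core_opt} vacuous; this is excluded by the uniqueness asserted in Proposition~\ref{prop:stable_allocation}.
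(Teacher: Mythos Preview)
Your approach is correct in outline but takes a genuinely different route from the paper. The paper proceeds \emph{sequentially} at the equilibrium: first a contradiction argument (Lemma~\ref{lemma:core_positive}) shows that for $\coreBudget$ large the core must follow every periphery agent, because otherwise some rate $\coreInteractOpt{y'}\to\infty$ and reallocating a bounded block of that rate to an unfollowed $y$ strictly increases the core's objective (using Assumption~\ref{ass:positive_utility}); then a limiting argument (Lemmas~\ref{lemma:core_infinite}--\ref{lemma:bmin2}) forces $\sum_{z}\big[p(z|y)e^{-\alpha/\coreInteractOpt{z}}-c\big]>0$ for all $y$; and only then is the same contradiction argument rerun on the periphery side (Lemma~\ref{lemma:periphery_sufficient}) to obtain $\mu^*(\core|y)>0$. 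No fixed-point theorem is invoked.

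By contrast, you isolate the mutual dependence between $\coreInteractVecOpt$ and $\mu^*(\core|\cdot)$ as the main obstacle and resolve it in one shot via a Brouwer argument on an invariant ``everybody-connected'' set $\mathcal D$, appealing to the uniqueness of Proposition~\ref{prop:stable_allocation} to identify the fixed point with $\muc^*$. This is more explicit about the circularity---the paper's Lemma~\ref{lemma:core_positive} quietly treats the periphery-to-core rates as fixed positive quantities when bounding the core's gain---and would generalize more easily to variants where the two sides' incentives are more tightly coupled. The paper's argument, on the other hand, is more elementary (no topology, just concavity and limits) and gives slightly more: it yields the intermediate monotonicity $\coreInteractOpt{y}\to\infty$ as $\coreBudget\to\infty$, which it later reuses. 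Your preliminary step establishing that direct periphery--periphery links with $p(z|y)>c$ and the outside link are active for large $\leafBudget$ is not needed for the proposition as stated, though it does supply the $O(1/\leafBudget^2)$ shadow-price bound you use.
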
  
Proposition~\ref{prop:sufficient_budget} states that if the rate budgets $\coreBudget$ and $\leafBudget$ are high enough then all periphery agents follow the core agent, and the core agent will follow all periphery agents.

Proposition~\ref{prop:sufficient_budget} provides conditions for a core-periphery community to emerge. In a core-periphery community, the core agent collects content from (almost) all periphery agents and makes it available to the periphery agents. In addition, in a core-periphery community (almost) all periphery agents follow the core agent in order to obtain content from the community.  Proposition~\ref{prop:sufficient_budget} states that in order for this structure to emerge, the agents have to be sufficiently interested in getting content and allocated a sufficient amount of time (a sufficiently large rate budget) to sharing online content.

\subsection{Connectivity between Periphery Agents}
We next study the structural properties of how periphery agents follow each other in a core-periphery community. We have the following result.
\begin{prop}\label{prop:periphery_local_interaction} 
	For a  Nash equilibrium $\muc^* = (\coreInteractVecOpt,\muc^*_p)$ as given in Proposition~\ref{prop:sufficient_budget} the following is true. For each periphery agent $y \in C_p$ there exists a threshold $t(y) > 0$ such that
	$I(\mu^*(z|y)) = 1$,
	if, and only if, 
	$p(z|y) > t(y)$.
\end{prop}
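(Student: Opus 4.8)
The plan is to analyze the optimization problem $\optperiphery{\coreInteractVecOpt}{y}$ faced by a fixed periphery agent $y$, treating the core agent's allocation $\coreInteractVecOpt$ and all other periphery agents' allocations as fixed (which is legitimate at a Nash equilibrium). The key structural fact is that in $\leafOptUtility{y}$, the only term that depends on $\mu(z|y)$ (the direct following rate to periphery agent $z$) is $\UCdp(z|y) = r_p[p(z|y)e^{-\alpha/\mu(z|y)} - c]I(\mu(z|y))$; the core-routed utilities $\UCdc(z|y)$ depend only on $\mu(y_c|y)$ and $\coreInteractOpt{z}$, and the outside-platform utility $\UCda(y)$ depends only on $\lambda(y)$. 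So agent $y$'s problem decouples across the "resource" variables in the sense that, given how much total budget $b$ agent $y$ devotes to directly following periphery agents, the optimal way to split $b$ among the various $z$'s is a pure allocation subproblem, and agent $y$ follows $z$ directly if and only if doing so is "worth it" relative to the shadow price of the budget constraint.

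**First I would** set up the Lagrangian / KKT conditions for $\optperiphery{\coreInteractVecOpt}{y}$. Let $\nu \ge 0$ be the multiplier for the budget constraint $\mu(\core|y) + \lambda(y) + \sum_{z}\mu(z|y)\le \leafBudget$. For each $z$, either $\mu^*(z|y) = 0$, or $\mu^*(z|y) > 0$ and (by Assumption~\ref{ass:concave_assumption}, which forces $\mu^*(z|y) > \alpha$, putting us in the region where $g_z(\mu) := r_p[p(z|y)e^{-\alpha/\mu} - c]$ is increasing and — one checks — strictly concave for $\mu > \alpha/2$, hence certainly for $\mu > \alpha$) the first-order condition $g_z'(\mu^*(z|y)) = \nu$ holds. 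Define the per-agent value function $V_z(b) = \max_{0 \le \mu \le b}\,[\,r_p(p(z|y)e^{-\alpha/\mu}-c)\,]\,I(\mu)$ — i.e. the best utility $y$ can extract from directly following $z$ with budget $\le b$ — with the convention $V_z(b) = 0$ if the bracket is negative on $(\alpha, b]$. Because $g_z$ is increasing in $p(z|y)$ pointwise, $V_z$ is monotone in $p(z|y)$: a larger $p(z|y)$ gives a (weakly) larger value function and a larger marginal value $g_z'$ at any fixed rate.

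**The crux** is then a monotonicity/threshold argument. At the optimum, agent $y$ includes $z$ in its directly-followed set precisely when the marginal benefit of the first unit of rate allocated to $z$, appropriately compared against the shadow price $\nu^*$ of the budget, is positive — more precisely, when $\max_{\mu > \alpha} g_z(\mu) - \nu^*\mu > 0$ (allocating to $z$ beats leaving that rate unused or, equivalently, spending it elsewhere at marginal value $\nu^*$). Since $g_z(\mu) = r_p p(z|y) e^{-\alpha/\mu} - r_p c$ is strictly increasing in $p(z|y)$ for every $\mu$, the quantity $h(p) := \max_{\mu>\alpha}\{r_p p\, e^{-\alpha/\mu} - r_p c - \nu^* \mu\}$ is strictly increasing and continuous in $p$, with $h(p) < 0$ for $p$ small (the bracket is $\le r_p p - r_p c$, negative when $p < c$) and $h(p) > 0$ for $p$ large enough (given $\leafBudget > m_p$ as in Proposition~\ref{prop:sufficient_budget}, so that $\nu^*$ is not too large and at least one agent is followed). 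Hence there is a unique threshold $t(y)$ with $h(p) > 0 \iff p > t(y)$, and $I(\mu^*(z|y)) = 1 \iff p(z|y) > t(y)$. I would also note $t(y) > 0$ since $h(p) < 0$ for $p \le c$, so $t(y) \ge c > 0$.

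**The main obstacle** I anticipate is handling the shadow price $\nu^*$ cleanly: $\nu^*$ depends on the whole solution (including how much budget goes to following the core agent and the outside platforms), so one must argue it is well-defined and that the comparison "$h(p) > 0$" is the correct inclusion criterion even at the boundary where the budget constraint may or may not be tight. If the budget constraint is slack ($\nu^* = 0$), the criterion degenerates to $\max_{\mu>\alpha} r_p(p\,e^{-\alpha/\mu}-c) > 0$, i.e. $p > c$, and the threshold is simply $t(y) = c$; if it is tight, $\nu^* > 0$ and one needs the strict concavity of $g_z$ to guarantee the inner $\max$ is attained at an interior point and varies continuously. A secondary technical point is the non-smoothness introduced by the indicator $I(\cdot)$ and Assumption~\ref{ass:concave_assumption}: I would dispatch this by the standard argument that, thanks to $c > e^{-1}$, any positive rate to $z$ must exceed $\alpha$, so we never operate in the "bad" convex region $\mu \in (0,\alpha)$ of $e^{-\alpha/\mu}$, and the problem is effectively a concave program on the relevant domain — making the KKT conditions both necessary and sufficient.
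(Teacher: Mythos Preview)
Your approach is essentially the same as the paper's: both proofs use the KKT/first-order conditions for the periphery agent's concave program and argue that inclusion of $z$ in the directly-followed set is monotone in $p(z|y)$, yielding a threshold. The only substantive difference is that the paper does not leave the shadow price abstract: since Proposition~\ref{prop:sufficient_budget} guarantees $\mu^*(y_c|y)>0$, the paper identifies $\nu^*$ with the marginal value of following the core, namely $S(y_c|y)\,\frac{\alpha}{(\mu_y^*)^2}e^{-\alpha/\mu_y^*}$, and compares it to the \emph{maximum} marginal value $J(\alpha/2)=\frac{4}{\alpha}e^{-2}r_p\,p(z|y)$ of following $z$, obtaining the explicit threshold $t(y)=\frac{4}{(\mu_y^*)^2}e^{2-\alpha/\mu_y^*}S(y_c|y)$; your net-Lagrangian criterion $h(p)>0$ is a closely related (and arguably cleaner) formulation of the same inclusion test but stops short of this closed form.
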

Note that the value of $ p(z|y)$ is higher for agents $z$ that are close to agent $y$. As a result, 
Proposition~\ref{prop:periphery_local_interaction} states that each periphery agent $y$ follows other periphery agents $z$ that are  not too far away from $y$. 
Combining this result with Proposition~\ref{prop:sufficient_budget}, Proposition~\ref{prop:periphery_local_interaction} states that core-periphery communities have the structural property that periphery agents follow the core agent, as well as other periphery agents that produce content close to the agents' main interest. This result provides insight into how content is propagated within a core-periphery community. 
In particular, the result implies that content propagates in the following two manners: it spreads (globally) through the core agent within the community, as well as locally through the connection between periphery agents that have similar interests. 

\subsection{Following Rates}
Next, we characterize the following rates between the core agent and periphery agents. We obtain the following result.
\begin{prop}\label{prop:structure}
	For a  Nash equilibrium $\muc^* = (\coreInteractVecOpt,\muc^*_p)$ as given in Proposition~\ref{prop:sufficient_budget} the following is true.
	If for two periphery agents $y,y' \in \leafSet$ we have
	$$||y - \Cc || < ||y' - \Cc ||,$$
	then we have that
	$$\mu^*(y_c|y) > \mu^*(y_c|y')
	\quad \mbox{ and } \quad 
	\mu_c^*(y) > \mu_c^*(y').$$
\end{prop}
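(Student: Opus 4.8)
The plan is to reduce Proposition~\ref{prop:structure} to a monotonicity property of two ``effective content weights'', prove that property by a convolution--concavity argument, and close the resulting circular dependence using uniqueness of the equilibrium together with the large--budget hypothesis.

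\emph{Symmetry and reduction.} The periphery set $\leafSet$ is invariant under the reflection $\sigma(x)=2\Cc-x$, and $p(x|y)=f(\|x-y\|)$ is $\sigma$--invariant; hence the reflection of a Nash equilibrium is again a Nash equilibrium, and by uniqueness (Proposition~\ref{prop:stable_allocation}) it equals the original. Therefore $\coreInteractOpt{y}$ and $\mu^*(\core|y)$ depend on $y$ only through $\|y-\Cc\|$, and it suffices to compare $y,y'$ on the same side of $\Cc$ with $\|y-\Cc\|<\|y'-\Cc\|$. By Proposition~\ref{prop:sufficient_budget} the relevant rates are strictly positive, so at the equilibrium both budget constraints bind and the objectives are stationary in these variables; and by Assumption~\ref{ass:concave_assumption} every positive equilibrium rate exceeds $\alpha$, a range on which $\psi(\mu):=\tfrac{\alpha}{\mu^2}e^{-\alpha/\mu}$ (the derivative of $\mu\mapsto e^{-\alpha/\mu}$) is strictly decreasing. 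The Karush--Kuhn--Tucker stationarity conditions then read
\begin{equation*}
A_y\,\psi(\coreInteractOpt{y})=\nu,\qquad A_y:=\sum_{z\in\Clw{y}} r_p\, f(\|z-y\|)\, e^{-\alpha/\mu^*(\core|z)},
\end{equation*}
for the core's problem~\eqref{eq:core_opt} ($\nu>0$ the multiplier of $\sum_{y}\coreInteract{y}\le\coreBudget$), and
\begin{equation*}
B_y\,\psi(\mu^*(\core|y))=\eta_y,\qquad B_y:=\sum_{z\in\Clw{y}} r_p\, f(\|z-y\|)\, e^{-\alpha/\coreInteractOpt{z}},
\end{equation*}
for the core--following variable in agent $y$'s problem~\eqref{eq:periphery_opt} ($\eta_y>0$ the multiplier of agent $y$'s budget). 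Since $\psi$ is strictly decreasing, $\coreInteractOpt{y}>\coreInteractOpt{y'}$ is equivalent to $A_y>A_{y'}$, and $\mu^*(\core|y)>\mu^*(\core|y')$ is equivalent to $B_y/\eta_y>B_{y'}/\eta_{y'}$. So the task becomes to order these effective weights by centrality.

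\emph{Ordering the effective weights.} The key tool is the following convolution fact: for any $\sigma$--symmetric nonnegative weights $g=(g(z))_{z\in\leafSet}$, the map $x\mapsto\sum_{z\in\leafSet}g(z)f(\|z-x\|)$ is concave on $\R$ — each summand $x\mapsto f(\|z-x\|)$ is concave, being concave on either side of $z$ with a downward kink at $z$ because $f$ is decreasing and concave — and symmetric about $\Cc$, hence nonincreasing in $\|x-\Cc\|$, and strictly so when $f$ is strictly decreasing. Both $A_\cdot$ and $B_\cdot$ have this form after stripping the single self term (the $z$ equal to the agent in question, at most $r_p f(0)$), with $g(\cdot)=r_p e^{-\alpha/\mu^*(\core|\cdot)}$ and $g(\cdot)=r_p e^{-\alpha/\coreInteractOpt{\cdot}}$ respectively; by the symmetry step these $g$'s are $\sigma$--symmetric. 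Writing a $g$ that is moreover nonincreasing in $\|z-\Cc\|$ through its layer--cake decomposition into indicators of $\Cc$--centered subintervals of $\leafSet$ and applying the convolution fact layer by layer upgrades ``$g$ monotone in centrality'' to ``$\sum_{z\in\Clw{w}} r_p f(\|z-w\|)g(z)$ strictly decreasing in $\|w-\Cc\|$''. For the periphery inequality one must in addition control the per--agent multiplier $\eta_y$: here I would use that $\eta_y$ is the common marginal value of rate across all of agent $y$'s active investments, including the follow to outside platforms whose coefficient $r_0B_0$ is identical for every agent, together with the common budget $\leafBudget$, to bound how $\eta_y$ varies with centrality and conclude that $B_y$ outgrows it.

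\emph{Closing the loop.} The monotonicity of the core rates rests on that of the periphery rates (through $A_\cdot$) and conversely (through $B_\cdot$); this circularity is the heart of the argument, and I would resolve it as a simultaneous claim, either by iterating the best--response map on the cone of symmetric rate profiles that are monotone in centrality and invoking Proposition~\ref{prop:stable_allocation} (the equilibrium being the unique fixed point), or by induction on the agents' centrality rank. The main obstacle, and the only place where genuine estimates are needed, is to show that the bulk term $\sum_{z\in\Clw{w}} r_p f(\|z-w\|)g(z)$ dominates the lower--order corrections (the excluded self term and the spread of the multipliers $\eta_y$); this is exactly where the large--budget hypothesis of Proposition~\ref{prop:sufficient_budget} enters, since it forces all equilibrium rates to be large and hence all weights $e^{-\alpha/\mu}$ close to $1$ (they already lie in $(e^{-1},1)$ by Assumption~\ref{ass:concave_assumption}), making those corrections uniformly small relative to the bulk.
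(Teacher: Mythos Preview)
Your scheme is essentially the paper's: reduce to first–order conditions, show that centrality–monotonicity of one side's rates forces centrality–monotonicity of the other side's best response, and then close the loop by iterating best responses and invoking convergence to the unique Nash equilibrium (the paper initializes the core at a uniform allocation and alternates updates). Your symmetry reduction and the layer–cake/convolution treatment of the weighted sums $A_y,B_y$ are genuine improvements over the paper, which simply asserts that the weighted inequalities $\sum_{z\neq y}p(y|z)e^{-\alpha/\mu(y_c|z)}$ (and the analogous $B_y$) inherit the ordering of the unweighted sums from Lemmas~\ref{lemma:production_rank} and~\ref{lemma:benefit_rank}; that assertion is only literally justified at the base step where the weights are constant, and your argument supplies the missing monotone–weight step.

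The one place your outline is weaker than the paper \emph{claims} to be is the per–agent multiplier $\eta_y$: you correctly observe that $\mu^*(\core|y)>\mu^*(\core|y')$ is equivalent to $B_y/\eta_y>B_{y'}/\eta_{y'}$ and flag controlling $\eta_y$ as the ``main obstacle,'' while the paper's Lemma~\ref{lemma:periphery_respond} simply writes ``then first order condition would imply $\mu(y_c|y_1)>\mu(y_c|y_2)$,'' i.e.\ it silently treats the multipliers as if they were equal across agents. So this is not a gap relative to the paper's proof; rather, you have identified a subtlety the paper glosses over. Your proposed fix (use the common outside–platform coefficient $r_0B_0$ to anchor $\eta_y$, then use the large–budget regime of Proposition~\ref{prop:sufficient_budget} to make the residual variation small) is plausible but is left as a sketch; if you want a complete argument you will need to actually carry out that estimate, since neither your write–up nor the paper does.
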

Proposition~\ref{prop:structure} states  that periphery agents that are close to the center $\Cc$ of the community have higher interaction rates compared with a periphery agents further away from $\Cc$. More precisely, both the rate $\mu^*(y_c|y)$ with which periphery agent $y$ follows the core agent $y_c$, and the rate $\mu_c^*(y)$ with which the core agent follows periphery agent $y$, is higher for an agent $y$ closer to the center of the community $\Cc$.

Proposition~\ref{prop:structure}  provides  a ``ranking'' or ``ordering'' of periphery agents $y \in \leafSet$ based on how close they are to the center  $\Icen$ of the community. While it is impossible to directly measure how close a periphery agent is with respect to the center of the community, it is possible to measure/estimate the interaction rates of the agent with the core agent. These measurements/estimates can be used in return to infer how close an agent is to the center of the community.


\section{Conclusions}\label{sec:conclusion}
We characterized the structural properties of core-periphery communities using a game-theoretic framework. Assuming that agents allocate a sufficient rate (as given by Proposition~\ref{prop:sufficient_budget}), we obtain the following results:
\begin{enumerate}
	\item[a)] {\bf Connectivity of Core Agents:} Core agents follow  all periphery agents (Proposition~\ref{prop:sufficient_budget}). This confirms the results obtained from experimental studies that core agents serve as a ``hub'' for the community by collecting (aggregating) content and making it available to the other agents in the community~\cite{core_periphery_community_experiment,vlog}.
	\item[b)] {\bf Connectivity of Periphery Agents:} Periphery agents have two types of connections. First, they all follow the core agents (Proposition~\ref{prop:sufficient_budget}). Second, they also follow other periphery agents whose main interest closely matches their interest (Proposition~\ref{prop:periphery_local_interaction}). This result implies that the structure of a core-periphery is not given by a star structure, but has a more complex structure with connections between periphery agents. In addition, this result provides insight into how content propagates within a community (see discussion after Proposition~\ref{prop:periphery_local_interaction}).
	\item[c)] {\bf Interaction Rates:} Periphery agents whose main interest is closer to the center of the community $\Cc$ have higher interaction rates with the core agent compared with agents further away from $\Cc$ (Proposition~\ref{prop:structure}). One possible application of this result is to rank periphery agents with respect to how close their main interest is to the community center $\Cc$ (see discussion after Proposition~\ref{prop:structure}). 
\end{enumerate}

The obtained results provide a mathematical characterization of the structure of core-periphery communities, that can be used to design algorithms.
We are currently using these structural properties to derive community detection algorithms that require only local information, and community-based content recommendation algorithms. The obtained allow us to derive these algorithms in a formal manner, and provide formal performance guarantees.


%

%
%
\bibliographystyle{unsrt}
\bibliography{content/references}

\appendix
\section{PROOF OF PROPOSITION \ref{prop:stable_allocation}}\label{proof:existence}
In this appendix we prove Proposition~\ref{prop:stable_allocation}. To do that, we first establish that the model of Section~\ref{sec:problem_formulation} corresponds to  an exact potential game. Next, we show that the potential function of the game is strictly concave. Finally, we show that the strategy space of each agent is convex and and compact. Proposition~\ref{prop:stable_allocation} then follows from Theorem 2 in~\cite{unique_concave_potential} which states that a potential game with strictly concave potential function, and convex and compact strategy space, has a unique Nash equilibrium.

Recall that $\muc = (\coreRateVec,\muc_p)$ is the set of allocation vectors of all agents in the community, where $\coreRateVec$ is the allocation of core agent $y_c$, and $\muc_p = \{\mu_p(y)\}_{y \in C_p}$ is the set of allocation of periphery agents. To simplify the notation, we use  $\muc_{-z} = \{\mu_p(y)\}_{y \in C_p \backslash \{z\}}$ to denote the set of allocation of periphery agents except for periphery agent $z$. Furthermore, recall that the utility function of core agent is given by

\begin{equation*}
\begin{split}
U(y_c|\muc_p,\coreRateVec) = 
 \sum_{y \in \leafSet } \sum_{z \in \Clw{y}}   \delayCoreUtility{z}{y},
\end{split}
\end{equation*}
and the utility function of periphery agents is defined as
\begin{equation*}
\begin{split}
U(y|\coreRateVec,\leafRateVec{y}) = & \sum_{z \in \Clw{y}} \delayCoreUtility{z}{y}  \\
&+ \delayUtility{z}{y} + \alternative{\lambda(y)}.
\end{split}
\end{equation*}

The following lemma states that the model of Section~\ref{sec:problem_formulation} corresponds to  an exact potential game.

\begin{lemma}\label{lemma:potential_game}
	The interaction among agents form a potential game and the potential function is given by 
	\begin{equation*}
	\begin{aligned}
	G(\muc) = 
	&\sum_{y \in C_p} \sum_{z \in C_p \backslash\{y\}}  \delayCoreUtility{z}{y}\\
	&+ \sum_{y \in C_p} \delayUtility{z}{y}  + \alternative{\lambda(y)}.
	\end{aligned}
	\end{equation*}
\end{lemma}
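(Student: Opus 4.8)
The plan is to verify the defining property of an exact potential game directly. The players are the core agent $\core$ and the periphery agents $y\in\leafSet$; a function $G$ is an exact potential for this game if, for every player $i$, every strategy profile $\muc_{-i}$ of the other players, and every two strategies $s_i,s_i'$ available to $i$, one has $U_i(s_i,\muc_{-i})-U_i(s_i',\muc_{-i}) = G(s_i,\muc_{-i})-G(s_i',\muc_{-i})$. Because the indicators $I(\cdot)$ make the utilities non-smooth, rather than matching partial derivatives I would establish the equivalent statement: for each player $i$ there is a function $H_i$ of $\muc_{-i}$ alone with $G(\muc) = U_i(\muc) + H_i(\muc_{-i})$. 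Write $G(\muc)=G_1+G_2+G_3$ with $G_1=\sum_{y\in\leafSet}\sum_{z\in\Clw{y}}\UCdc(z|y)$, $G_2=\sum_{y\in\leafSet}\sum_{z\in\Clw{y}}\UCdp(z|y)$ and $G_3=\sum_{y\in\leafSet}\UCda(y)$ (the summation ranges here being those of \eqref{eq:periphery_utility}); the proof then reduces to checking this decomposition for the core agent and for an arbitrary periphery agent.

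For the core agent, its objective in \eqref{eq:core_opt} is by definition $U(\core|\muc_p,\coreRateVec)=G_1$. The terms $G_2$ and $G_3$ are built only from the periphery-to-periphery rates $\mu(z|y)$ and the outside rates $\lambda(y)$, hence do not involve the core's variables $\coreRateVec=(\coreInteract{y})_{y\in\leafSet}$. Thus $G(\muc)=U(\core|\muc_p,\coreRateVec)+(G_2+G_3)$ with $G_2+G_3$ a function of $\muc_p$ only, which is the required decomposition for $\core$.

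For a fixed periphery agent $y$, recall that $\leafRateVec{y}$ consists of the coordinates $\mu(z|y)$ for $z\in\Clw{y}$, the coordinate $\mu(\core|y)$, and $\lambda(y)$; in particular $\mu(\core|y)$ is controlled by $y$ whereas $\coreInteract{y}$ is controlled by the core. Split each $G_j$ into its ``$y$-owned'' outer term and the remainder, e.g. $G_1=\sum_{z\in\Clw{y}}\UCdc(z|y)+\sum_{y'\in\Clw{y}}\sum_{z\in\Clw{y'}}\UCdc(z|y')$, and likewise for $G_2$, while $G_3=\UCda(y)+\sum_{y'\in\Clw{y}}\UCda(y')$. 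The three $y$-owned pieces $\sum_{z\in\Clw{y}}\UCdc(z|y)$, $\sum_{z\in\Clw{y}}\UCdp(z|y)$ and $\UCda(y)$ add up exactly to $\leafOptUtility{y}$ as defined in \eqref{eq:periphery_utility}. It remains to see that the remainders do not involve $\leafRateVec{y}$: by \eqref{eq:utility_core} a term $\UCdc(z|y')$ with $y'\neq y$ depends only on $\coreInteract{z}$ and $\mu(\core|y')$ — the former a coordinate of the core's allocation, the latter of $\leafRateVec{y'}$, neither a coordinate of $\leafRateVec{y}$; by \eqref{eq:UCd_delay2} a term $\UCdp(z|y')$ with $y'\neq y$ depends only on $\mu(z|y')$, a coordinate of $\leafRateVec{y'}$ (note that when $z=y$ this is $\mu(y|y')$, the rate at which $y'$, not $y$, follows $y$); and $\UCda(y')$ with $y'\neq y$ depends only on $\lambda(y')$. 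Hence $G(\muc)=\leafOptUtility{y}+H_y$, where $H_y$ depends only on $\coreRateVec$ and $\{\leafRateVec{y'}\}_{y'\neq y}$, completing the decomposition for $y$.

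There is no single deep step; the lemma is a bookkeeping argument. The one point that deserves attention is the overlap between players: the same term $\UCdc(z|y)$ enters both the core agent's utility (inside $G_1$) and periphery agent $y$'s utility (the first summand of $\leafOptUtility{y}$), yet is counted only once in $G$. This is still consistent with the potential identity, which only constrains unilateral deviations, because $\UCdc(z|y)$ is a function of the block $(\coreInteract{z},\mu(\core|y))$ alone: a deviation of the core moves it — and $G_1$, hence $G$ — by the same amount as the core's utility, and a deviation of $y$ moves it — and $y$'s $y$-owned block, hence $G$ — by the same amount as $y$'s utility. Getting the ownership of every coordinate right, namely $\coreInteract{\cdot}$ for the core and $\mu(\core|y)$, $\mu(z|y)$, $\lambda(y)$ for periphery agent $y$, is precisely what makes the two decompositions above go through.
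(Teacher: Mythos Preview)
Your proof is correct and follows essentially the same route as the paper: direct verification that $G$ is an exact potential by checking, for each player, that $G$ and that player's utility differ by a term independent of the player's own strategy. The only cosmetic difference is that the paper carries this out by expanding $G(\muc')-G(\muc'')$ and cancelling the player-independent terms, whereas you phrase the same cancellation as the decomposition $G=U_i+H_i(\muc_{-i})$; your explicit remark on coordinate ownership (in particular that $\mu(\core|y)$ belongs to $y$ while $\coreInteract{y}$ belongs to the core) is exactly the bookkeeping the paper's cancellation relies on.
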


\begin{proof}
	To show the result of this lemma, we need to show that if any agent (core or periphery agent) changes its strategy/allocation, then the difference in the potential function $G(\muc)$ is equal to the difference in the utility that the agent obtains. In other word, we need to establish the following two properties.
	
First, we have show that if core agent $y_c$ changes its allocation from $\mu_c'$ to $\mu_c''$, then we have that
	\begin{equation}\label{eq:core_change}
		\begin{aligned}
			&G((\mu_c',\muc_p)) - G((\mu_c'',\muc_p)) 
		   = U(y_c|\muc_p,\mu'_c) - U(y_c|\muc_p,\mu''_c).
		\end{aligned}
	\end{equation}
        
Second, we have to show that if periphery agent $y$ changes its allocation from $\mu'_p(y)$ to $\mu''_p(y)$, then we have that
	\begin{equation}\label{eq:leaf_change}
	\begin{aligned}
		&G((\mu_c,\muc_{-y}\cup\{\mu'_p(y)\})) - G((\mu_c,\muc_{-y}\cup\{\mu''_p(y)\})) 
	    =U(y|\mu'(y),\mu_c) - U(y|\mu''(y),\mu_c).
	\end{aligned}
	\end{equation}

We first establish Eq.~\eqref{eq:core_change}. Using the definition of $G(\muc)$, we rewrite $G((\mu_c',\muc_p)) - G((\mu_c'',\muc_p))$ as 
	
	\begin{equation*}
		\begin{aligned}
		&G((\mu_c',\muc_p)) - G((\mu_c'',\muc_p)) = \\
		& \sum_{y \in C_p} \sum_{z \in \Clw{y}} \delayCoreUtilityR{\mu_c'(z)}{\mu(y_c|z)}\\
		& + \sum_{y \in C_p} \delayUtility{z}{y} + \sum_{y \in C_p} \alternative{\lambda(y)}  \\			
		& - \sum_{y \in C_p} \sum_{z \in \Clw{y}} \delayCoreUtilityR{\mu_c''(z)}{\mu(y_c|z)}\\
		& - \sum_{y \in C_p} \delayUtility{z}{y} + \sum_{y \in C_p} \alternative{\lambda(y)}.  \\		
				\end{aligned}.
		\end{equation*}
Using this expression, it follows that
	\begin{equation*}
	  \begin{aligned}
               &G((\mu_c',\muc_p)) - G((\mu_c'',\muc_p)) = \\
		& \sum_{y \in C_p} \sum_{z \in \Clw{y}} \delayCoreUtilityR{\mu_c'(z)}{\mu(y_c|z)} \\
	    &-\sum_{y \in C_p} \sum_{z \in \Clw{y}}\delayCoreUtilityR{\mu_c''(z)}{\mu(y_c|z)} \\
          	=& U(y_c|\muc_p,\mu_c') - U(y_c|\muc_p,\mu_c'').  
		\end{aligned}
	\end{equation*}

	Next, we establish Eq.~\eqref{eq:core_change}. We obtain that
	\begin{equation*}
	\begin{aligned}
		&G((\mu_c,\muc_{-y}\cup\{\mu'_p(y)\})) - G((\mu_c,\muc_{-y}\cup\{\mu''_p(y)\})) = \\
		&  \sum_{y \in \Clw{y_k}} \sum_{z \in \Clw{y_k,y}} \delayCoreUtility{z}{y}\\
		& + \delayUtility{z}{y} + \alternative{\lambda(y)}  \\
		& + \sum_{z \in \Clw{y_k}}  r_p[p(z|y_k) e^{ - \frac{\alpha}{\mu_c(z)} - \frac{\alpha}{\mu'(y_c|y_k)} }  -  c ] I(\mu'(y_c|y_k))I(\mu_c(z))\\
		&+  r_p [ p(z|y_k) e^{ - \frac{\alpha}{\mu'(z|y_k)}} -   c ]I(\mu'(z|y_k)) 
		+ \alternative{\lambda'(y_k)} \Bsbrr\\
		& -  \sum_{y \in \Clw{y_k}} \sum_{z \in \Clw{y_k,y}} \delayCoreUtility{z}{y} \\
		& - \delayUtility{z}{y} + \alternative{\lambda(y)} \\
		& - \sum_{z \in \Clw{y_k}}  r_p [p(z|y_k) e^{ - \frac{\alpha}{\mu_c(z)} - \frac{\alpha}{\mu''(y_c|y_k)} }  -  c ] I(\mu''(y_c|y_k))I(\mu_c(z))  \\
		&- r_p [ p(z|y_k) e^{ - \frac{\alpha}{\mu''(z|y_k)}} -   c ]I(\mu''(z|y_k))
		 - \alternative{\lambda''(y_k)}.\\
	\end{aligned}
	\end{equation*}

It then follows that 
	
	\begin{equation*}
	  \begin{aligned}
            	&G((\mu_c,\muc_{-y}\cup\{\mu'_p(y)\})) - G((\mu_c,\muc_{-y}\cup\{\mu''_p(y)\})) = \\
		  & \sum_{z \in \Clw{y_k}}  r_p [p(z|y_k) e^{ - \frac{\alpha}{\mu_c(z)} - \frac{\alpha}{\mu'(y_c|y_k)} }  -  c ] I(\mu'(y_c|y_k))I(\mu_c(z))  \\
		&+ r_p [ p(z|y_k) e^{ - \frac{\alpha}{\mu'(z|y_k)}} -   c ]I(\mu'(z|y_k))\\ 
		&+ \alternative{\lambda'(y_k)}\\
		&-  \sum_{z \in \comSet \backslash\{y_k\}}  r_p  [p(z|y_k) e^{ - \frac{\alpha}{\mu_c(z)} - \frac{\alpha}{\mu''(y_c|y_k)} }  -  c ] I(\mu''(y_c|y_k))I(\mu_c(z))\\
		&- r_p [ p(z|y_k) e^{ - \frac{\alpha}{\mu''(z|y_k)}} -   c ]I(\mu''(z|y_k))\\ 
	    &- \alternative{\lambda''(y_k)}\\
            &= U(y|\mu'(y),\mu_c) - U(y|\mu''(y),\mu_c).
		\end{aligned}
	\end{equation*}
	
	This completes the proof of the lemma.
\end{proof}

We next show  that the potential function $G(\muc)$ is strictly concave. To do that, we define more precisely the strategy space of each agent. To do that, let rate threshold $\threshold$ be such that
$$e^{- \frac{\alpha}{\threshold}} = c.$$
Note that for the rate threshold  $\threshold$, we have that if 
$$ \mu(z|y) > 0,$$
and agent $y$ follows agent $z$ with a positive rate, 
        then we have that
        $$ \mu(z|y) \geq \threshold.$$
        To see this, note that by the definition of the utility of agent $z$ we have for $0 < \mu(z|y) < \threshold$ that the utility that agent $y$ obtains from following agent $z$ is negative. As a result, in this case agent $y$ is better off not to follow agent $z$ at all, and set $\mu(z|y) = 0$, and receive a utility equal to 0 for the content of agent $z$. By Assumption~\ref{ass:concave_assumption}, we have that
$$ \threshold > \alpha.$$

Without loss of generality, we can then consider rate vectors $\leafAllocation{y}= ( \mu(z|y) )_{z \in \comSet \backslash{\{y\}}}$ in the set (strategy space) $\strategySpace_p$ given by
$$\strategySpace_p = A^{K+1},$$
where $K$ is the number of periphery agents, and the  set $A$ is given by
$$ A = \{0\} \cup [\threshold,\leafBudget].$$
Similarly, the strategySpace of the core agent is given by
$$\strategySpace_c =  B^K,$$
where the set $B$ is given by
$$B = \{0\} \cup [\threshold,\coreBudget].$$
The strategy space over all agents (core and periphery agents) is then given by
$$\allstrategySpace = \strategySpace_c \times \strategySpace_p^K,$$
and to analyze the Nash equilibrium it suffices to consider rate allocations
$$\muc \in \allstrategySpace.$$
Note that the strategy space $\strategySpace$ is convex and compact.

\begin{lemma}\label{lemma:potential_game_convexity}
	The potential function $G(\muc)$ of Lemma \ref{lemma:potential_game} is a strictly concave on $\allstrategySpace$.
\end{lemma}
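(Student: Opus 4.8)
The plan is to exploit the additive structure of the potential. Writing $g(\mu):=e^{-\alpha/\mu}$, the function $G(\muc)$ is a sum of three kinds of summands: ``relay'' terms $\delayCoreUtility{z}{y}$, each depending only on the two coordinates $\coreInteract{z}$ and $\mu(y_c|y)$; ``direct'' terms $\delayUtility{z}{y}$, each depending only on $\mu(z|y)$; and ``outside'' terms $\alternative{\lambda(y)}$, each depending only on $\lambda(y)$. On $\allstrategySpace$ every coordinate lies in $\{0\}\cup[\threshold,\leafBudget]$ (or $\{0\}\cup[\threshold,\coreBudget]$ for the core coordinates), so any line segment contained in $\allstrategySpace$ stays inside a single ``box'' on which each support indicator $I(\cdot)$ is a fixed constant. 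It therefore suffices to show that, after replacing every active indicator by $1$, each summand is concave in its arguments on the corresponding box — strictly so for the one-coordinate summands — and that on a given box every coordinate that actually varies enters at least one strictly concave summand. For the one-coordinate summands I would record $g'(\mu)=\tfrac{\alpha}{\mu^{2}}g(\mu)$ and $g''(\mu)=\tfrac{\alpha}{\mu^{4}}(\alpha-2\mu)g(\mu)$. By Assumption~\ref{ass:concave_assumption} the rate threshold obeys $\threshold>\alpha$, so every active rate satisfies $\mu\ge\threshold>\alpha>\alpha/2$, whence $g''(\mu)<0$; hence $\delayUtility{z}{y}=r_p p(z|y)g(\mu(z|y))-r_pc$ and $\alternative{\lambda(y)}=r_0 B_0 g(\lambda(y))-r_0c$, being a positive multiple of $g$ plus a constant, are strictly concave.

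The main step is the relay term, which on an active box equals $r_p p(z|y)\,g(\coreInteract{z})\,g(\mu(y_c|y))-r_pc$, so I must prove that $h(\mu_1,\mu_2):=g(\mu_1)g(\mu_2)$ is jointly concave on $[\threshold,\coreBudget]\times[\threshold,\leafBudget]$. Its Hessian has diagonal entries $g''(\mu_i)g(\mu_{3-i})<0$ and determinant $g(\mu_1)g(\mu_2)g''(\mu_1)g''(\mu_2)-g'(\mu_1)^{2}g'(\mu_2)^{2}$; using the identity $\tfrac{g(\mu)g''(\mu)}{g'(\mu)^{2}}=1-\tfrac{2\mu}{\alpha}$, this determinant equals $g'(\mu_1)^{2}g'(\mu_2)^{2}\big[(\tfrac{2\mu_1}{\alpha}-1)(\tfrac{2\mu_2}{\alpha}-1)-1\big]$, which is strictly positive because $\mu_1,\mu_2\ge\threshold>\alpha$ forces each factor $\tfrac{2\mu_i}{\alpha}-1$ to exceed $1$. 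Hence the Hessian is negative definite and the relay term is strictly concave on the box. This is precisely where Assumption~\ref{ass:concave_assumption} is indispensable: without $\threshold>\alpha$ the factors $\tfrac{2\mu_i}{\alpha}-1$ could be arbitrarily close to $0$, the determinant could turn negative, and a product of two concave exponentials need not be jointly concave.

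Finally I would assemble the pieces: each summand, regarded as a function of $\muc$ on a fixed box, is concave, so $G$ is concave there, and strictness along an arbitrary nonzero feasible direction $v$ follows by cases. If $v$ has a nonzero component along some $\mu(z|y)$ or $\lambda(y)$ coordinate, the corresponding strictly concave one-coordinate summand already gives $v^{\top}\nabla^{2}G\,v<0$; otherwise $v$ is supported on the core coordinates $\{\coreInteract{z}\}_z$ and $\{\mu(y_c|y)\}_y$, and then (using $K\ge2$, so that a varying core coordinate $\coreInteract{z}$ is matched by some active $\mu(y_c|y)$ with $y\ne z$, and $p(z|y)>0$) some relay term contributes strictly through its negative-definite $2\times2$ Hessian, so again $v^{\top}\nabla^{2}G\,v<0$. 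I expect the joint-concavity check for the bilinear-in-exponentials relay term of the previous paragraph to be the one genuinely non-routine point; the reduction to support boxes and the one-dimensional second-derivative test are routine bookkeeping, the only mild caveat being the degenerate boxes on which a core coordinate does not enter $G$ at all, which are controlled by the standing assumptions of Section~\ref{sec:problem_formulation}.
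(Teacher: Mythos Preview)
Your proof is correct and follows essentially the same route as the paper: decompose $G$ into the three summand types $f_1,f_2,f_3$ and verify strict concavity of each via second derivatives, using Assumption~\ref{ass:concave_assumption} (hence $\threshold>\alpha$) to force every active rate above $\alpha$. The one minor difference is in the two-variable relay term $f_1$: you certify negative definiteness of its $2\times2$ Hessian through the sign of the determinant (via the identity $gg''/(g')^{2}=1-2\mu/\alpha$), whereas the paper bounds the quadratic form $v^{\top}Hv$ directly.
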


\begin{proof}
	We rewrite the potential function $G(\muc)$ as 
	\begin{equation*}
	\begin{aligned}
	&G(\muc) = 
	 \sum_{y \in C_l} \sum_{z \in \Clw{y}}  \Bsbl f_1(\mu_c(z), \mu(y_c|y)) + f_2(\mu(z|y))   + f_3(\lambda(y)) \Bsbr,
	\end{aligned}
	\end{equation*}
	where 

	\begin{equation}\label{eq:core_term}
	\begin{aligned}
		&f_1(\mu_c(z), \mu(y_c|y)) = 
	 &\delayCoreUtility{z}{y},
	\end{aligned}
	\end{equation}
	and
	\begin{equation}\label{eq:leaf_term}
		f_2(\mu(z|y)) 		= \delayUtility{z}{y},
	\end{equation}
	as well as
	\begin{equation}\label{eq:alternative_term}
		f_3(\lambda(y)) = \alternative{\lambda(y)}.
	\end{equation}
	
	To show $G(\muc)$ is strictly concave, it suffices to show that the functions  $f_1(\mu_c(z), \mu(y_c|y))$, $f_2(\mu(z|y)), f_3(\lambda(y))$  are strictly concave under Assumption~\ref{ass:concave_assumption}. To do this, we first show that the Hessian of the  function $f_1(\mu_c(z), \mu(y_c|y))$ is negative definite under Assumption~\ref{ass:concave_assumption}. Next, we show that the second derivatives of the function $f_2(\mu(z|y))$ and $f_3(\lambda(y))$ are negative under Assumption~\ref{ass:concave_assumption}.

        We will with showing the Hessian of the function $f_1(\mu_c(z), \mu(y_c|y))$ is negative definite under Assumption~\ref{ass:concave_assumption}.  As $p(z|y)$, $r_p$, and $c$, are positive constants, in order to show this it suffices to establish that the Hessian of the function $ e^{ - \frac{\alpha}{\mu_c(z)} - \frac{\alpha}{\mu(y_c|z)} }$ is negative definite.

    The second derivatives of the function $e^{ - \frac{\alpha}{\mu_c(z)} - \frac{\alpha}{\mu(y_c|z)} }$  are given by	
$$\frac{d^2}{d\mu_c(z)d\mu(y_c|z)}e^{ - \frac{\alpha}{\mu_c(z)} - \frac{\alpha}{\mu(y_c|z)} }= \frac{\alpha^2}{\mu_c^2(z) \mu^2(y_c|z)}e^{ - \frac{\alpha}{\mu_c(z)} - \frac{\alpha}{\mu(y_c|z)} },$$
    and
    $$\frac{d^2}{d^2\mu_c(z)}e^{ - \frac{\alpha}{\mu_c(z)} - \frac{\alpha}{\mu(y_c|z)} }
    = \frac{\alpha}{\mu_c(z)^3} \left [ \frac{\alpha}{\mu_c(z)} - 2 \right ]  e^{ - \frac{\alpha}{\mu_c(z)} - \frac{\alpha}{\mu(y_c|z)} },$$
    as well as
	$$\frac{d^2}{d^2\mu(y_c|z)}e^{ - \frac{\alpha}{\mu_c(z)} - \frac{\alpha}{\mu(y_c|z)} } = \frac{\alpha}{\mu(y_c|z)^3} \left [ \frac{\alpha}{\mu(y_c|z)} - 2 \right ] e^{ - \frac{\alpha}{\mu_c(z)} - \frac{\alpha}{\mu(y_c|z)} }.$$

        It then follows that the Hessian $H$ of the function $e^{ - \frac{\alpha}{\mu_c(z)} - \frac{\alpha}{\mu(y_c|z)} }$ is given by 
        \begin{equation*}
	\begin{aligned}
	 &H = \begin{pmatrix}
	h11 & h12 \\
	h21 & h22 
	\end{pmatrix},
	\end{aligned}
	\end{equation*}
where
	\begin{equation*}
		\begin{split}
		 h11 &= 	\frac{\alpha}{\mu_c(z)^3} \left [ \frac{\alpha}{\mu_c(z)} - 2 \right ]  e^{ - \frac{\alpha}{\mu_c(z)} - \frac{\alpha}{\mu(y_c|z)} }\\
		 h12 &= \frac{\alpha^2}{\mu_c^2(z) \mu^2(y_c|z)}e^{ - \frac{\alpha}{\mu_c(z)} - \frac{\alpha}{\mu(y_c|z)} }\\
		 h21 &= \frac{\alpha^2}{\mu_c^2(z) \mu^2(y_c|z)}e^{ - \frac{\alpha}{\mu_c(z)} - \frac{\alpha}{\mu(y_c|z)} }\\
		 h22 &=  \frac{\alpha}{\mu(y_c|z)^3} \left [ \frac{\alpha}{\mu(y_c|z)} - 2 \right ]  e^{ - \frac{\alpha}{\mu_c(z)} - \frac{\alpha}{\mu(y_c|z)} }. 
		\end{split}
	\end{equation*}
        
Using the definition of the strategy space $\strategySpace$, in order to show that the Hessian $H$ is negative definite we can consider the case where either we have that
        $$\mu(z|y) = 0,$$
        or
        $$ \mu(z|y) \geq \threshold.$$
        Similarly, it suffices to consider the case where 
        $$\mu_c(z) = 0,$$
        or
        $$ \mu_c(z) \geq \threshold.$$
        
	We first consider the case where
        $\mu_c(z) \geq \threshold$ and $\mu(y_c|z) \geq \threshold$.
	In this case, we have that
	\begin{equation*}
	\begin{aligned}
	& \begin{pmatrix} \mu_c(z), \mu(y_c|z) \end{pmatrix}H \begin{pmatrix}	\mu_c(z)\\\mu(y_c|z)
	\end{pmatrix} \\
	  & = e^{ - \frac{\alpha}{\mu_c(z)} - \frac{\alpha}{\mu(y_c|z)} }
          \alpha \left [ \frac{1}{\mu_c(z)}\left ( \frac{\alpha}{\mu_c(z)}
            - 2 \right ) 
	  +  \frac{1}{\mu(y_c|z)} \left ( \frac{\alpha}{\mu(y_c|z)} - 2 \right ) + \frac{2\alpha^2}{\mu_c(z) \mu(y_c|z)} \right ] \\
        & <  e^{ - \frac{\alpha}{\mu_c(z)} - \frac{\alpha}{\mu(y_c|z)} }\alpha \left [ \frac{2\alpha^2}{\mu_c(z) \mu(y_c|z)} 
	 -  \frac{1}{\mu(y_c|z)} - \frac{1}{\mu_c(z)} \right ] \\
	& = e^{ - \frac{\alpha}{\mu_c(z)} - \frac{\alpha}{\mu(y_c|z)} }\alpha \left [ \frac{2\alpha-\mu_c(z)-\mu(y_c|z)}{\mu_c(z) \mu(y_c|z)} \right ].\\
	\end{aligned}
	\end{equation*}
        By Assumption~\ref{ass:concave_assumption} we have that
        $$\threshold > \alpha,$$
        and it follows that for $\mu_c(z) \geq \threshold$ and $\mu(y_c|z) \geq \threshold$ we have that
$$  \begin{pmatrix} \mu_c(z), \mu(y_c|z) \end{pmatrix}H \begin{pmatrix}	\mu_c(z)\\\mu(y_c|z)
	\end{pmatrix} < 0.$$      
This implies that the function $f_1(\mu_c(z), \mu(y_c|y))$ is strictly concave for the case where  $\mu_c(z) \geq \threshold$ and $\mu(y_c|z) \geq \threshold$.

Using the same argument, we can show that $f_1(\mu_c(z), \mu(y_c|y))$ is strictly concave for the case where  $\mu_c(z)\geq \threshold$ and  $\mu(y_c|y)=0$, and the case  where  $\mu_c(z)=0$ and $\mu(y_c|y)\geq \threshold$.
It follow that the  function $f_1(\mu_c(z), \mu(y_c|y))$ is strictly concave on $\strategySpace$. 

In addition, using the same argument, we can show that the functions $f_2(\mu(z|y))$ and  $f_3(\lambda(y))$, are strictly concave on $\strategySpace$. This completes the proof of the lemma.
	 
\end{proof}

Using  Lemma~\ref{lemma:potential_game} and Lemma~\ref{lemma:potential_game_convexity}, the result of Proposition~\ref{prop:stable_allocation} follows directly from Theorem 2 in~\cite{unique_concave_potential} which states that a potential game with strictly concave potential function, and convex and compact strategy space, has a unique Nash equilibrium.

\section{PROOF OF PROPOSITION \ref{prop:sufficient_budget}}\label{proof:sufficient_budget}

To prove Proposition ~\ref{prop:sufficient_budget}, we first show that core agent allocates positive rate to all periphery agents once its rate budget $\coreBudget$ is large enough. 
	
\begin{lemma}\label{lemma:core_positive}
  There exists a finite $b_1>0$ such that for  $\coreBudget > b_1$, we have that
  $$\coreInteractOpt{y} > 0, \qquad y \in \leafSet,$$
  where $\coreInteractOpt{y}$, $y \in \leafSet$, is the allocation of the core agent at the Nash equilibrium $\mucs$.
\end{lemma}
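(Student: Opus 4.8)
The plan is to argue by contradiction using the concavity of the potential function $G(\muc)$ established in Lemma~\ref{lemma:potential_game_convexity}, together with a marginal-utility (KKT) argument. Suppose that at the Nash equilibrium $\mucs$ there is some periphery agent $y \in \leafSet$ with $\coreInteractOpt{y} = 0$. Since the core agent's allocation $\coreInteractVecOpt$ maximizes $\optcore{\muc^*_p}$, the KKT conditions for the core agent's problem \eqref{eq:core_opt} must hold: there is a multiplier $\nu \ge 0$ for the budget constraint $\sum_{y} \coreInteract{y} \le \coreBudget$ such that for every $y$ with $\coreInteractOpt{y} > 0$ the partial derivative of the core objective with respect to $\coreInteract{y}$ equals $\nu$, while for every $y$ with $\coreInteractOpt{y} = 0$ that partial derivative is at most $\nu$. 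The first step is therefore to write down $\frac{\partial}{\partial \coreInteract{y}} \sum_{y' \in \leafSet}\sum_{z \in \Clw{y'}} \UCdc(z|y')$ explicitly; because only the terms with $z = y$ depend on $\coreInteract{y}$, this derivative is $\sum_{y' \in \Clw{y}} \frac{\alpha}{\coreInteract{y}^2} p(y|y') e^{-\alpha(1/\coreInteract{y} + 1/\mu^*(y_c|y'))} I(\mu^*(y_c|y'))$ on the region where $\coreInteract{y} \ge \threshold$ (and one must handle the boundary at $\coreInteract{y}=\threshold$ via the jump in the indicator, exactly as in the strict-concavity proof).

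The key quantitative step is to bound this marginal utility from below, uniformly, when $\coreInteract{y}$ is just above the threshold $\threshold$. By Proposition~\ref{prop:stable_allocation} the equilibrium is unique, and one can show that the feasible/relevant rates $\mu^*(y_c|y')$ that appear lie in a bounded range; combined with Assumption~\ref{ass:positive_utility} guaranteeing that enough periphery agents find $y$'s content valuable, the derivative at $\coreInteract{y}=\threshold^+$ is bounded below by some strictly positive constant $\kappa$ that does not depend on $\coreBudget$. On the other hand, the multiplier $\nu$ — which equals the common marginal utility of the rates the core agent actually uses — must go to $0$ as $\coreBudget \to \infty$: the marginal utility $\frac{\alpha}{\mu_c(y')^2}(\cdots)$ of each used rate is decreasing in $\mu_c(y')$ (this is the concavity from Lemma~\ref{lemma:potential_game_convexity}), and since the total budget $\coreBudget$ is spread over at most $K$ agents, some used rate must be at least $\coreBudget/K$, forcing its marginal utility, and hence $\nu$, below any prescribed level once $\coreBudget$ is large enough. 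Choosing $b_1$ so that $\coreBudget > b_1$ forces $\nu < \kappa$ then yields a contradiction with the KKT inequality $\kappa \le \frac{\partial}{\partial\coreInteract{y}}(\cdots)\big|_{0^+} \le \nu$ at the agent $y$ with $\coreInteractOpt{y}=0$.

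The main obstacle, and the place requiring care, is the discontinuity of the objective at the threshold: the derivative argument above is clean on the open region $\coreInteract{y} > \threshold$, but the "switching on" of agent $y$ from $\coreInteract{y}=0$ to $\coreInteract{y}=\threshold$ involves a finite increment, not an infinitesimal one, so the comparison must be phrased as "the finite gain from moving mass $\threshold$ onto agent $y$ exceeds the finite loss from removing mass $\threshold$ from an agent whose rate is $\Omega(\coreBudget/K)$," rather than purely in terms of derivatives. Making this finite-difference version precise — i.e. showing the gain $r_p[p(y|y')e^{-\alpha(1/\threshold + 1/\mu^*(y_c|y'))}-c]$ summed over interested $y'$ strictly exceeds the integrated marginal loss $\int$ of moving $\threshold$ worth of budget off a high-rate agent, which is $O(\threshold \cdot \coreBudget^{-2}) \to 0$ — is the technical heart of the argument, and is exactly why the statement only guarantees a threshold $b_1$ rather than holding for all $\coreBudget$.
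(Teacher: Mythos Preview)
Your proposal is correct and takes essentially the same approach as the paper: a contradiction via finite-difference budget reallocation, showing that for large $\coreBudget$ the gain from switching on an un-followed agent $y$ at a fixed rate exceeds the vanishing loss from removing the same mass from some agent $y'$ whose rate $\coreInteractOpt{y'}$ grows without bound. The paper separately disposes of the degenerate case $\sum_{y}\coreInteractOpt{y}=0$ and then handles the main case with a direct concavity bound on $e^{-\alpha/\mu}$, whereas you fold both into a single KKT/pigeonhole argument before passing to the finite-difference comparison, but the substance is identical.
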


\begin{proof}
	We will prove the lemma by contradiction. 
	Suppose that the core agent $y_c$ does not follow an agent $y$, no matter how large the rate budet $M_c$ is. There are two possible reasons for this: Case 1) the core agent $y_c$ does not interact with any periphery agent and we have that
        $$\sum_{y \in \leafSet} \coreInteractOpt{y} =0,$$
        and Case 2) the core agent $z$ follows at least one periphery agent $y'$ with a positive rate and we have that
$$\lim_{\coreBudget \to \infty} \coreInteractOpt{y'} = \infty.$$        

We first consider the case where $\sum_{y \in \leafSet} \coreInteractOpt{y} =0$.  
By Assumption~\ref{ass:positive_utility}, we have that
	\begin{equation}\label{eq:production_utility}
	\begin{aligned}
	\sum_{z \in \Clw{y}} p(y|z) - c > 0.
	\end{aligned}
	\end{equation}	
	To simplify the notation, let
        $$R(y) = \sum_{z \in \Clw{y}} p(y|z).$$
        With this, we can rewrite Eq~\eqref{eq:production_utility} 
	$$R(y) - (K-1)c,$$
	where $K$ is number of periphery agents in $\leafSet$. By Assumption~\ref{ass:positive_utility}, there exists a finite positive constant $M$ such that,
	$$R(y)e^{-\frac{\alpha}{M}} - (K-1)c > 0$$
It the follows that if $\coreBudget > M$, then the core $y_c$ gets a  positive utility by following agent $y$ and the rate allocation such that
        $$\sum_{y \in \leafSet} \coreInteractOpt{y} = 0$$
        is not optimal. This leads to a contraction to the assumption that the allocation  $\coreInteractVecOpt$ is optimal. 
        
        We next consider the case where there exists a agent $y'$ such that
 $$\lim_{\coreBudget \to \infty} \coreInteractOpt{y'} = \infty.$$       
By Assumption~\ref{ass:positive_utility}, there exists a finite positive constant $M$ such that 
	$$R(y)e^{-\frac{\alpha}{M}} - (K-1)c = \Delta > 0.$$	
Without loss of generality, we assume that the $\coreBudget$ is large enough such that $ \coreInteractOpt{y'} > M$. Suppose that the core agent reduces the rate to agent $y'$ by $M$, and consider the resulting difference in the utility given by
	\begin{equation*}
		R(y')e^{-\frac{\alpha}{ \coreInteractOpt{y'}}} - R(y')e^{-\frac{\alpha}{ \coreInteractOpt{y'}-M}}.
	\end{equation*} 
	By Lemma~\ref{lemma:potential_game_convexity}, the function $e^{-\frac{\alpha}{x}}$ is strictly concave for $ x \in \strategySpace_c$, and we have that 
	\begin{equation*}
	R(y')e^{-\frac{\alpha}{ \coreInteractOpt{y'}}} - R(y')e^{-\frac{\alpha}{ \coreInteractOpt{y'}-M}} < M \frac{\alpha}{ \coreInteractOpt{y'}^2}R(y') e^{-\frac{\alpha}{ \coreInteractOpt{y'}}} 
	\end{equation*}
	As we have that 
	$$\lim_{ \coreInteractOpt{y'} \to \infty} \frac{\alpha}{ \coreInteractOpt{y'}^2}R(y') e^{-\frac{\alpha}{ \coreInteractOpt{y'}}} = 0,$$
it follows that there exists a finite $\mu^*$ such that if $\coreInteractOpt{y'} > \mu^*$, then we have that
	\begin{equation*}
	\begin{aligned}
	&	R(y')e^{-\frac{\alpha}{ \coreInteractOpt{y'}}} - R(y')e^{-\frac{\alpha}{ \coreInteractOpt{y'}}}
	< \Delta.
	\end{aligned}
	\end{equation*}
        As by assumption we have that
        $$\lim_{\coreBudget \to \infty} \coreInteractOpt{y'} = \infty,$$ 
that there exists a finite constant $M_c^*$, such that for  $\coreBudget > M_c^*$, then the core agent $y$ can increase its  utility by setting $\mu_c(y) = M$ and $\mu_c(y') = \coreInteractOpt{y'}  - M$.
This leads to a contraction to the assumption that the allocation  $\coreInteractVecOpt$ is optimal. 

This completes the proof of the lemma. 
 	
\end{proof}
Using the same argument that we used to prove Lemma~\ref{lemma:core_positive}, we have that		
\begin{lemma}\label{lemma:core_infinite}
We have that 
  $$\lim_{\coreBudget \to \infty} \coreInteractOpt{y} = \infty, \qquad y \in \leafSet,$$
  where $\coreInteractOpt{y}$, $y \in \leafSet$, is the allocation of the core agent at the Nash equilibrium $\mucs$.
\end{lemma}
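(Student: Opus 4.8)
The plan is to mirror the contradiction argument in the proof of Lemma~\ref{lemma:core_positive}, with ``$\coreInteractOpt{y}=0$'' replaced by ``$\coreInteractOpt{y}$ stays bounded as $\coreBudget\to\infty$''. Suppose the claim fails for some $y\in\leafSet$: then there is a finite constant $B$ and a sequence of budgets $\coreBudget_n\to\infty$ along which the corresponding Nash-equilibrium allocation $\coreInteractVecOpt$ (which depends on $\coreBudget_n$) satisfies $\coreInteractOpt{y}\le B$. The goal is to derive a contradiction with the optimality of $\coreInteractVecOpt$ in $\optcore{\muc^*_p}$.

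First I would observe that for $\coreBudget$ large the core uses its entire budget, i.e.\ $\sum_{y'\in\leafSet}\coreInteractOpt{y'}=\coreBudget_n$. This is because the core's objective is strictly increasing in each coordinate $\mu_c(y')$ it uses with a positive rate: $\frac{d}{dx}e^{-\alpha/x}=\frac{\alpha}{x^{2}}e^{-\alpha/x}>0$, and for $\coreBudget$ large at least one periphery agent follows the core with a positive rate (by the same positive-utility reasoning as in Lemma~\ref{lemma:core_positive}, now applied to the periphery side), so the core's objective genuinely depends on its allocation; hence any slack in the budget constraint could be removed to strictly increase the objective, contradicting optimality. Since $\sum_{y'}\coreInteractOpt{y'}=\coreBudget_n\to\infty$, $\coreInteractOpt{y}\le B$, and $|\leafSet|=K<\infty$, some periphery agent $y'\neq y$ must satisfy $\coreInteractOpt{y'}\to\infty$ along a subsequence.

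Next I would compare moving a fixed amount $M$ of following rate from $y'$ to $y$. On the side of $y'$, strict concavity of $x\mapsto e^{-\alpha/x}$ on $\strategySpace_c$ (Lemma~\ref{lemma:potential_game_convexity}) bounds the loss by $M\frac{\alpha}{\coreInteractOpt{y'}^{2}}\,r_p R(y')\,e^{-\alpha/\coreInteractOpt{y'}}$ (with $R(y')=\sum_{z\neq y'}p(y'|z)$ as in the proof of Lemma~\ref{lemma:core_positive}), which tends to $0$ as $\coreInteractOpt{y'}\to\infty$. On the side of $y$ the gain is bounded below by a fixed positive constant: if $\coreInteractOpt{y}=0$ this is the Lemma~\ref{lemma:core_positive} estimate obtained from Assumption~\ref{ass:positive_utility} and a suitable choice of $M$; if $0<\coreInteractOpt{y}\le B$, then $\coreInteractOpt{y}\ge\threshold$ by Assumption~\ref{ass:concave_assumption}, so the relevant secant slope of $e^{-\alpha/x}$ over $[\threshold,B+M]$ is bounded away from $0$, and it multiplies the strictly positive quantity $r_p\sum_{y'':\,\mu^*(y_c|y'')>0}p(y|y'')\,e^{-\alpha/\mu^*(y_c|y'')}$. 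Hence for $n$ large the reallocation strictly increases the core's utility, contradicting the optimality of $\coreInteractVecOpt$, and the lemma follows.

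The step I expect to be the main obstacle is the uniform positive lower bound on the gain at $y$ in the ``$0<\coreInteractOpt{y}\le B$'' case: it relies both on the threshold $\threshold$ keeping $\coreInteractOpt{y}$ away from $0$ and on the fact that, for $\coreBudget$ large, enough periphery agents follow the core that $\sum_{y'':\,\mu^*(y_c|y'')>0}p(y|y'')>0$. The latter is essentially the periphery-side analogue of Lemma~\ref{lemma:core_positive}, which I would establish first (or extract from the argument used in the proof of Proposition~\ref{prop:sufficient_budget}).
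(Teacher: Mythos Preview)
Your proposal is correct and follows essentially the same approach as the paper: the paper does not give a separate proof for this lemma but simply states that it is obtained ``using the same argument that we used to prove Lemma~\ref{lemma:core_positive}'', which is exactly what you set out to do. Your treatment is in fact more careful than the paper's sketch: you explicitly handle the case $0<\coreInteractOpt{y}\le B$ via the threshold $\threshold$ and a secant-slope bound, and you correctly flag that the positive lower bound on the gain requires at least some periphery agents to follow the core---a dependence on $\muc^*_p$ that the paper's proof of Lemma~\ref{lemma:core_positive} suppresses (it works with $R(y)=\sum_{z}p(y|z)$ as though the factors $e^{-\alpha/\mu(y_c|z)}I(\mu(y_c|z))$ were all equal to $1$).
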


Let $S(y|\mu_c)$ be given by
	\begin{equation}\label{eq:core_utility_simp}
	\begin{aligned}
		&S(y|\mu_c) = \sum_{z \in \Clw{y}} p(z|y) e^{ - \frac{\alpha}{\mu_c(z)} }- c.
	\end{aligned}
	\end{equation}
We then have the following result.
	
\begin{lemma}\label{lemma:bmin2}
		There exists a finite $b_2>0$ such that for  $\coreBudget > b_2$ we have that
		$$S(y|\coreInteractVecOpt)) > 0, \qquad y \in \leafSet,$$
where $\coreInteractVecOpt$ is the allocation of the core agent at the Nash equilibrium $\mucs$.
	\end{lemma}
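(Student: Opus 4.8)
The plan is to pass to the limit $\coreBudget \to \infty$, combining the divergence of the core agent's equilibrium allocation (Lemma~\ref{lemma:core_infinite}) with the strict positivity guaranteed by Assumption~\ref{ass:positive_utility}; since $\leafSet$ is finite, the resulting per-agent thresholds can be merged into a single one.

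First I would extract the needed strict inequality from Assumption~\ref{ass:positive_utility}. For every $y \in \leafSet$ we have $\sum_{z \in \Clw{y}}\bigl(p(z|y) - c\bigr) > 0$, i.e. $\sum_{z \in \Clw{y}} p(z|y) > (K-1)c$, and since $c > 0$ and $K \geq 2$ this yields $\sum_{z \in \Clw{y}} p(z|y) > c$ for each $y$. Because $\leafSet$ is finite, $\rho := \min_{y \in \leafSet}\sum_{z \in \Clw{y}} p(z|y) > c$, so there is a finite rate level $m^*$ with $e^{-\frac{\alpha}{m^*}}\rho > c$ (such $m^*$ exists because $e^{-\frac{\alpha}{m}} \to 1$ as $m \to \infty$ while $c/\rho < 1$).

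Next I would invoke Lemma~\ref{lemma:core_infinite}: at the unique Nash equilibrium $\mucs$ we have $\coreInteractOpt{z} \to \infty$ as $\coreBudget \to \infty$ for every $z \in \leafSet$. Hence for each $z$ there is a finite $b_2(z)$ with $\coreInteractOpt{z} \geq m^*$ whenever $\coreBudget > b_2(z)$; set $b_2 := \max_{z \in \leafSet} b_2(z)$, which is finite since $\leafSet$ is finite. Then for $\coreBudget > b_2$ every $z \in \leafSet$ satisfies $\coreInteractOpt{z} \geq m^*$, hence $e^{-\frac{\alpha}{\coreInteractOpt{z}}} \geq e^{-\frac{\alpha}{m^*}}$, and therefore for every $y \in \leafSet$,
\begin{equation*}
S(y|\coreInteractVecOpt) = \sum_{z \in \Clw{y}} p(z|y)\, e^{-\frac{\alpha}{\coreInteractOpt{z}}} - c \;\geq\; e^{-\frac{\alpha}{m^*}}\sum_{z \in \Clw{y}} p(z|y) - c \;\geq\; e^{-\frac{\alpha}{m^*}}\rho - c \;>\; 0 ,
\end{equation*}
by the choice of $m^*$. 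This is exactly the claimed inequality, with the single threshold $b_2$.

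I do not expect a genuine obstacle: the only points needing care are (i) uniformity over $y$ and over $z$, handled by taking maxima of finitely many thresholds, and (ii) verifying that the limiting value $\sum_{z \in \Clw{y}} p(z|y) - c$ is strictly positive, which is implied by (and slightly weaker than) Assumption~\ref{ass:positive_utility}. The one real ingredient — that the core's equilibrium rates grow without bound with its budget — is supplied ready-made by Lemma~\ref{lemma:core_infinite}. If the intended parsing of Eq.~\eqref{eq:core_utility_simp} places the $-c$ inside the sum, the same argument goes through verbatim, using the full strength of Assumption~\ref{ass:positive_utility} (which directly gives $\rho > (K-1)c$) and choosing $m^*$ with $e^{-\frac{\alpha}{m^*}}\rho > (K-1)c$.
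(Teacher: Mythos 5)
Your proof is correct and follows essentially the same route as the paper: both rely on Lemma~\ref{lemma:core_infinite} to drive the exponential factors to $1$, use Assumption~\ref{ass:positive_utility} to make the limiting value $\sum_{z \in \Clw{y}} p(z|y) - c$ strictly positive, and exploit the finiteness of $\leafSet$ to obtain a single uniform threshold $b_2$. The only cosmetic difference is that the paper runs an $\epsilon = K/2$ argument on the convergence of $S(y|\coreInteractVecOpt)$ itself, whereas you bound the exponentials below by $e^{-\alpha/m^*}$; these are interchangeable.
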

	
	\begin{proof}
	 By Assumption~\ref{ass:positive_utility}, we have that 
	 $$K(y) = \sum_{z \in \Clw{y}} p(z|y) - c> 0, \qquad y \in \leafSet.$$
         Let $K$ be given by
         $$ K = \min_{y \in \leafSet} K(y).$$
         Note that we have that $K>0$. 
		
	Furthermore, by Lemma~\ref{lemma:core_infinite} we have that
		\begin{equation*}
		\lim_{\coreBudget \xrightarrow{} \infty}  S(y|\coreInteractVecOpt) = \sum_{z \in C_l \backslash \cb{y}} p(z|y) -c.
		\end{equation*}
		
		Therefore, for very $\epsilon > 0$,  there exists a  $b_2$ such that for  $\coreBudget > b_2$, we have
		
		\begin{equation*}
		\begin{split}
		\sum_{z \in C_l \backslash \cb{y}} (p(z|y)-c) - S(y|\coreInteractVecOpt) &< \epsilon, \qquad y \in \leafSet.\\
		\end{split}
		\end{equation*}
                In particular, for $\epsilon = K/2$, there exists a $b_2$ such that for  $\coreBudget > b_2$, we have
    \begin{equation*}
		\begin{split}
		\sum_{z \in C_l \backslash \cb{y}} (p(z|y)-c) - S(y|\coreInteractVecOpt) &< \epsilon = \frac{K}{2}, \qquad y \in \leafSet,\\
		\end{split}
		\end{equation*}            
and we obtain that                
		\begin{equation*}
		\begin{split}
		 S(y_c|\coreInteractVecOpt) > \sum_{z \in C_l \backslash \cb{y}} p(z|y) - c - \frac{K}{2} = K(y) - \frac{K}{2} \geq \frac{K}{2} > 0, \qquad y \in \leafSet.\\
		\end{split}
		\end{equation*}
	\end{proof}

        Using  Lemma ~\ref{lemma:bmin2}, we obtain the following result.

        \begin{lemma}\label{lemma:periphery_sufficient}
There exists constants $m_c$ and $m_p$  such that if
$$\coreBudget > m_c 
\quad \mbox{ and } \quad 
\leafBudget > m_p,$$
then under the Nash equilibrium $\muc^* = (\coreInteractVecOpt,\muc^*_p)$ we have for all periphery agents $y \in \leafSet$ that
$$  \mu^*_p(y_c|y) > 0.$$
        \end{lemma}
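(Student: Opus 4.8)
The plan is as follows. First I would invoke the core‑side results already established — Lemma~\ref{lemma:core_positive}, Lemma~\ref{lemma:core_infinite}, and Lemma~\ref{lemma:bmin2} — to fix $m_c$ so that, for every $\coreBudget > m_c$, the core agent's equilibrium allocation $\coreInteractVecOpt$ satisfies $\coreInteractOpt{z} > 0$ for all $z \in \leafSet$ (hence $I(\coreInteractOpt{z}) = 1$) and $S(y|\coreInteractVecOpt) > 0$ for all $y \in \leafSet$; after possibly enlarging $m_c$ I would also arrange that $S(y|\coreInteractVecOpt)$ stays bounded below by a positive constant uniformly over $\coreBudget > m_c$, which is possible because $S(y|\coreInteractVecOpt) \to \sum_{z\in\Clw{y}}(p(z|y)-c) > 0$ as $\coreBudget\to\infty$ (Lemma~\ref{lemma:core_infinite} and Assumption~\ref{ass:positive_utility}). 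The rest of the argument is a contradiction on the periphery side: assuming $\mu^*(y_c|y) = 0$ at the Nash equilibrium $\muc^* = (\coreInteractVecOpt,\muc^*_p)$ for some $y \in \leafSet$, I would exhibit a feasible unilateral deviation of agent $y$ that strictly increases the value of $\optperiphery{\coreInteractVecOpt}{y}$, contradicting Definition~\ref{def:stable_allocation}.

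The key preliminary quantity is the rate $y$ must commit to the core agent for it to pay off. With $\coreInteractVecOpt$ held fixed and $\coreInteractOpt{z} > 0$ for all $z$, by \eqref{eq:utility_core} the utility that $y$ obtains from the core agent when it follows it at rate $\mu(y_c|y) = t > 0$ is
\[
h_y(t) \;=\; r_p \sum_{z \in \Clw{y}} \Big( p(z|y)\, e^{-\alpha/\coreInteractOpt{z}}\, e^{-\alpha/t} \;-\; c \Big),
\]
which is increasing in $t$ and satisfies $h_y(t) \to r_p\, S(y|\coreInteractVecOpt) > 0$ as $t \to \infty$ by Lemma~\ref{lemma:bmin2}. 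Hence there is a finite rate $t^* \ge \threshold$, which can be taken the same for all $y$ (finitely many agents) and bounded uniformly over $\coreBudget > m_c$ by the lower bound on $S$ above, with $h_y(t^*) > 0$ for every $y \in \leafSet$. Since at $\mu(y_c|y) = 0$ agent $y$ gets $0$ from the core, gaining $h_y(t^*)$ is a genuine improvement provided $y$ can free up rate $t^*$ at small enough cost.

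For the deviation, let $R = \lambda^*(y) + \sum_{z \in \Clw{y}} \mu^*(z|y) \le \leafBudget$ be the rate $y$ currently spends on its non‑core channels, and distinguish two cases. If $\leafBudget - R \ge t^*$, then adding $\mu(y_c|y) = t^*$ to $\leafRateVecS{y}$ is feasible and raises $y$'s utility by $h_y(t^*) > 0$ — a contradiction. If instead $R > \leafBudget - t^*$, then (once $\leafBudget > t^*$) agent $y$ follows at least one non‑core channel, and since $R$ is spread over at most $K$ channels ($K-1$ periphery agents plus the outside platforms), some followed channel $w$ carries rate $\mu^*(w|y) \ge R/K > (\leafBudget - t^*)/K$. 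Deviate by cutting channel $w$ by $t^*$ and setting $\mu(y_c|y) = t^*$; for $\leafBudget$ large this is feasible ($\mu^*(w|y) - t^* \ge \threshold$), and because the per‑channel contributions to $y$'s utility are separable, the net change equals $h_y(t^*) - \Delta_w$, where $\Delta_w$ is the utility lost on channel $w$. Writing that channel's utility as $r_w\big(p_w e^{-\alpha/\mu} - c\big)$ with $r_w \in \{r_p, r_0\}$ and $p_w \le 1$, and using the concavity of $x \mapsto e^{-\alpha/x}$ on $[\threshold, \infty)$ (valid since $\threshold > \alpha$; see Lemma~\ref{lemma:potential_game_convexity}),
\[
\Delta_w \;=\; r_w p_w\big( e^{-\alpha/\mu^*(w|y)} - e^{-\alpha/(\mu^*(w|y) - t^*)} \big) \;\le\; \frac{\max(r_p, r_0)\,\alpha\, t^*}{\big(\mu^*(w|y) - t^*\big)^2} \;\le\; \frac{\max(r_p, r_0)\,\alpha\, t^*}{\big((\leafBudget - t^*)/K - t^*\big)^2} \;\xrightarrow[\leafBudget \to \infty]{}\; 0 .
\]
Thus there is a finite $m_p$ (larger than $t^*$, than the feasibility level $K(\threshold + t^*) + t^*$, and than the level past which the last bound drops below $h_y(t^*)$) such that for $\leafBudget > m_p$ the deviation strictly improves $y$'s utility — again a contradiction. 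Hence $\mu^*(y_c|y) > 0$ for all $y \in \leafSet$ whenever $\coreBudget > m_c$ and $\leafBudget > m_p$.

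I expect the crux to be precisely this second case. Unlike following another periphery agent (where an infinitesimal perturbation suffices), following the core agent only becomes profitable once $y$ commits at least $t^*$ to it — a vanishing rate yields negative marginal utility by Assumption~\ref{ass:concave_assumption} — so the deviation must reclaim a non‑infinitesimal amount of rate. The resolution is to combine the pigeonhole bound $\mu^*(w|y) \ge R/K$ with the diminishing returns (concavity of $e^{-\alpha/x}$, Lemma~\ref{lemma:potential_game_convexity}): when $\leafBudget$ is large, $y$ must be over‑provisioning some channel so heavily that returning $t^*$ of its rate costs almost nothing. This is also what makes $m_p$ an honest finite constant rather than something growing with $\leafBudget$.
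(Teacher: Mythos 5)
Your proof is correct and follows essentially the same route as the paper: invoke Lemma~\ref{lemma:bmin2} to get a finite rate at which following the core yields strictly positive utility, then argue by contradiction that a periphery agent with $\mu^*(y_c|y)=0$ can profitably reallocate, using concavity of $e^{-\alpha/x}$ to show the rate reclaimed from an over-provisioned channel costs arbitrarily little for large $\leafBudget$. The paper merely gestures at ``the same argument as Lemma~\ref{lemma:core_positive}''; your explicit pigeonhole bound $\mu^*(w|y)\ge R/K$ and the feasibility/uniformity checks fill in details the paper leaves implicit.
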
  

\begin{proof}        

	Let $b_2$ be given as in Lemma ~\ref{lemma:bmin2}, and let $\coreBudget > b_2$. Using Lemma~\ref{lemma:bmin2}, there exists a finite positive $M$ such that
	
	$$S(y|\coreInteractVecOpt)e^{-\frac{\alpha}{M}}-c = \Delta >  0.$$

We can then prove the lemma by contradiction, using the same argument as given in the proof for Lemma~\ref{lemma:core_positive}.     
   
\end{proof}	
Proposition~\ref{prop:sufficient_budget} then follows directly from Lemma~\ref{lemma:core_positive} and Lemma~\ref{lemma:periphery_sufficient}.

\section{PROOF OF PROPOSITION \ref{prop:periphery_local_interaction}}\label{proof:periphery_local_interaction}

	We established in Appendix ~\ref{proof:existence} that the interactions between core and periphery agents is characterized by an exact potential function which is concave. This means that the optimal rate allocations of each agent has to satisfy the first order conditions.

	Let $J(\mu(z|y))$ denote the partial derivative between two agents $z$ and $y$ and it is given by,
	\begin{equation*}
	J(\mu(z|y)) = \frac{\alpha r_p }{\mu^2(z|y)}e^{ - \frac{\alpha}{\mu(z|y)}}   p(z|y)
	\end{equation*}
	The derivative of $J(\mu(z|y))$ is given as follow,
	\begin{equation*}
	\frac{dJ(\mu(z|y))}{d\mu(z|y)} = \frac{\alpha}{\mu^3(z|y)}(\frac{\alpha}{\mu(z|y)} - 2) e^{ - \frac{\alpha}{\mu(z|y)}} p(z|y)
	\end{equation*}
	
	As we can see from above that, the maximum of $J(\mu(z|y))$ is obtained at $\mu(z|y) = \frac{\alpha}{2}$ and $J(\mu(z|y))$ would monotonic decrease afterwards.
	
	Next, recall the utility of core agent $S(y|y_c)$ which is given by Eq~\eqref{eq:core_utility_simp}. let $\mu(y_c|y) = \mu_y^*$ denote the optimal allocation of agent $y$ to the core under the Nash equilibrium given by Proposition~\ref{prop:sufficient_budget}. 
	
	By first order condition, we know that if $y$ interact with agent $z$, then the partial derivative converge to the same constant as the term of core agent. Since we know that the partial derivative of each term is monotonic decreasing and achieve maximum at $\frac{\alpha}{2}$. Therefore, agent $y$ interacts with agent $z$ if and only if the maximum partial derivative exceed the partial derivative with respect to the core agent. In other word, we need the following condition,
	\begin{equation*}
	\frac{\alpha}{(\frac{\alpha}{2})^2}e^{-2} p(z|y) > S(y|y_c)\frac{\alpha}{(\mu_y*)^2}e^{-\frac{\alpha}{\mu_y^*}},
	\end{equation*}
 Therefore, for agent $y$ to follow agent $z$, we need 
	\begin{equation*}
	p(z|y) > S(y_c|y)\frac{\alpha}{(\mu_y*)^2}e^{-\frac{\alpha}{\mu_y^*}}e^2\frac{4}{\alpha},
	\end{equation*}
	which is equivalent to 
	\begin{equation*}
	\Bf{z}{y} > S(y_c|y)\frac{4}{(\mu_y*)^2}e^{-\frac{\alpha}{\mu_y^*}}e^2
	\end{equation*}
	Therefore, $t(y) = S(y_c|y)\frac{4}{(\mu_y*)^2}e^{-\frac{-\alpha}{\mu_y^*}}e^2$ is the threshold for the periphery-periphery interaction to happen. 

\section{PROOF OF PROPOSITION \ref{prop:structure}}\label{proof:structure}
In this appendix, we will prove Proposition \ref{prop:structure}. To show the result of this proposition, we first show that the utility function of core agent is concave, and if the periphery agent allocate the way given in the proposition, it is unique and optimal for the core agent to allocate the way given in the proposition. Then, we show similar result for the periphery agents. Using these results and Proposition~\ref{prop:stable_allocation}, we construct an iterative update process that converge to the unique stable allocation, and show that the structural properties given in the proposition would preserve in each iteration. Therefore, the stable allocation in the end has the same structural property. Next, we first establish the result for the core agent.

%
%
%

\subsection{Core Agent Allocation}
Here, we will established that the core agent optimization problem is concave under Assumption \ref{ass:concave_assumption}, and structural property of core agent's optimal respond when periphery agents' allocation has property described in the proposition. We will use these result to show that if the allocation of periphery agents satisfy property in the proposition, then it is uniquely optimal for the core agent to respond in a way with structure described in the Proposition \ref{prop:stable_allocation}.

Recall the core agent allocation problem 
\begin{equation*}
\begin{split}
&\max_{\mu(y_c)}\sum_{y \in C} \sum_{z \in \Clw{y}}  \\
& \hspace{0.5in} \delayCoreUtility{z}{y}
\end{split}
\end{equation*}
subject to
$$ \sum_{y \in C} \mu_c(y) \leq \coreBudget,$$
$$\mu_c(y) \geq 0.$$

\begin{lemma}\label{lemma:core_concavity}
    Under Assumption \ref{ass:concave_assumption}, the optimization problems that characterize core agent's allocation is concave.
\end{lemma}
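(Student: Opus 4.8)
The plan is to use the fact that, with the periphery agents' allocations $\muc_p$ held fixed, the core agent's objective is a \emph{separable} function of the coordinates $\mu_c(z)$, $z \in \leafSet$, so that concavity of the whole program reduces to a single one-dimensional second-derivative estimate --- essentially the one already carried out for the function $f_1$ in Lemma~\ref{lemma:potential_game_convexity}. Concretely, I would first rewrite the objective as
$$\sum_{y \in \leafSet}\sum_{z \in \Clw{y}} \delayCoreUtility{z}{y} \;=\; \sum_{z \in \leafSet} h_z(\mu_c(z)),$$
where $h_z(t) = \sum_{y \in \Clw{z}} r_p\Bsbl p(z|y)\, e^{-\alpha/t}\, e^{-\alpha/\mu(y_c|y)} - c\Bsbr I(t)\, I(\mu(y_c|y))$, since every summand depends on $\coreInteractVec$ only through the single coordinate $\mu_c(z)$. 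A sum of concave functions of disjoint blocks of variables is concave, so it suffices to show that each $h_z$ is concave on the admissible range of $\mu_c(z)$; equivalently, the Hessian of the objective is block-diagonal and it is enough to control each scalar block.

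Next I would reduce to a convex domain exactly as in Appendix~\ref{proof:existence}: a value $\mu_c(z) \in (0,\threshold)$ gives that coordinate a strictly negative contribution and is dominated by $\mu_c(z) = 0$, so without loss of generality $\mu_c(z) \in \{0\}\cup[\threshold,\coreBudget]$. On the interval $[\threshold,\coreBudget]$ the indicator $I(\mu_c(z))$ is identically $1$, so there $h_z(t) = A_z\, e^{-\alpha/t} + B_z$ with $A_z = \sum_{y\in\Clw{z}} r_p\, p(z|y)\, e^{-\alpha/\mu(y_c|y)} I(\mu(y_c|y)) \ge 0$ and $B_z$ constant. A direct computation gives $\frac{d^2}{dt^2}e^{-\alpha/t} = \frac{\alpha}{t^3}\left(\frac{\alpha}{t}-2\right)e^{-\alpha/t}$, which is negative whenever $t > \alpha/2$; since Assumption~\ref{ass:concave_assumption} yields $\threshold > \alpha > \alpha/2$, the function $e^{-\alpha/t}$ is strictly concave on $[\threshold,\coreBudget]$, and because $A_z \ge 0$ so is $h_z$ (strictly, if at least one periphery agent follows the core). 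Alternatively, one simply observes that $h_z$ is the restriction of the strictly concave $f_1$ of Lemma~\ref{lemma:potential_game_convexity} to its first argument, so its concavity is inherited directly. Combining this with the fact that the feasible set $\{\coreInteractVec:\ \sum_{y\in\leafSet}\mu_c(y)\le\coreBudget,\ \mu_c(y)\ge 0\}$ is convex gives the lemma: the core agent's problem is the maximization of a concave function over a convex set.

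The main obstacle --- really the only subtle point --- is the bookkeeping around the indicator functions $I(\mu_c(z))$. The raw objective is discontinuous at $\mu_c(z)=0$, and $e^{-\alpha/t}$ is genuinely convex for $t < \alpha/2$, so concavity fails on the full box $[0,\coreBudget]^{K}$; the argument only goes through after the reduction of the previous paragraph, which uses Assumption~\ref{ass:concave_assumption} to rule out positive allocations below $\threshold > \alpha$. Everything else is routine: separability makes the Hessian block-diagonal, and each scalar block is the one-dimensional concavity check already supplied by Lemma~\ref{lemma:potential_game_convexity}.
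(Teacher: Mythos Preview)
Your argument is correct and follows essentially the same route as the paper: both proofs reduce the core agent's objective to a sum of the terms $f_1(\mu_c(z),\mu(y_c|y))$ and then invoke the concavity of $f_1$ established in Lemma~\ref{lemma:potential_game_convexity}. Your version is in fact more explicit --- you exploit separability in the $\mu_c(z)$ coordinates to reduce to a one-dimensional second-derivative check and you spell out the role of Assumption~\ref{ass:concave_assumption} in handling the indicators --- whereas the paper simply writes the objective as $\sum_{y}\sum_{z} f_1(\mu_c(z),\mu(y_c|y))$ and appeals directly to the already-proved concavity of $f_1$.
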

\begin{proof}

Recall that $U(y_c|\mu(y_c),\muc_p)$ is given as follows
\begin{equation*}
\begin{split}
U(y_c|\muc_p,\coreRateVec) = 
 \sum_{y \in C_p } \sum_{z \in \Cw{y}}  
 \delayCoreUtility{z}{y}.
\end{split}
\end{equation*}
We can rewrite $U(y_c|\mu(y_c),\muc_p)$ as
\begin{equation*}
\begin{split}
& U(y_c|\muc_p,\coreRateVec) = \sum_{y \in C_p } \sum_{z \in \Cw{y}}  f_1(\mu_c(z), \mu(y_c|y)),
\end{split}
\end{equation*}
where $f_1(\mu_c(z), \mu(y_c|y))$ is given in Eq.~\eqref{eq:core_term}. And we proved in Proposition~\ref{prop:stable_allocation} that $f_1(\mu_c(z), \mu(y_c|y))$ is indeed concave under Assumption~\ref{ass:concave_assumption}. Therefore $U(y_c|\muc_p,\coreRateVec)$ is sum of concave function, and therefore concave. This completes the proof of this lemma.
\end{proof}

Now that we have established the objective function of the core agent is concave. The optimal allocation of core agent has to satisfy the  optimal condition, and the partial derivative of positive allocation is computed as following

\begin{equation*}
\begin{split}
\frac{dU(y_c|\mu(y_c),\muc_p)}{d\mu_c(y)}&=\frac{\alpha }{\mu_c^2(y)}e^{ - \frac{\alpha}{\mu_c(y)}}  \sum_{z \in \Clw{y} }  p(y|z) e^{  - \frac{\alpha}{\mu(y_c|z)} }
\end{split}
\end{equation*}

By first order condition, we know that for an allocation to be optimal, the partial derivative with respect to the rate allocation of agents with positive positive value should converge to the same constant. In other word, by first order condition, we have for all agent $y,y' \in C$ such that $\mu_c(y),\mu_c(y') > 0$, we have
\begin{equation}\label{eq:core_foc}
\frac{dU(y_c|\mu(y_c),\muc_p)}{d\mu_c(y)} = \frac{dU(y_c|\mu(y_c),\muc_p)}{d\mu_c(y')}, \forall \mu_c(y'), \mu_c(y)>0
\end{equation}

The next lemma shows that beneficial rank of content produced by periphery agents. It shows that agent close to the center of community produce content of higher interest to the community.

\begin{lemma}\label{lemma:production_rank}
	If $y,y' \in C$ are two periphery agents such that
	$$||y - \Cc || < ||y' - \Cc ||$$
	then we have that
	$$\sum_{z \in \Clw{y}} p(y|z)> \sum_{z \in \Clw{y'}} p(y'|z).$$
\end{lemma}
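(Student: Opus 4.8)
The plan is to prove the lemma by a midpoint-reflection cancellation argument that uses only the symmetric placement of the periphery agents and the monotonicity of $f$ (concavity is not needed here). Since $p(y|z)=f(\|y-z\|)$ is symmetric, set $g(y)=\sum_{z\in\leafSet}f(\|y-z\|)$; this equals $\sum_{z\in\Clw{y}}p(y|z)$ plus the constant $f(0)$, so it suffices to show $g(y)>g(y')$. Because $\leafSet$ is invariant under the reflection $z\mapsto 2\Icen-z$ about the center, $g$ is symmetric about $\Icen$; hence, replacing $y$ by $2\Icen-y$ and/or $y'$ by $2\Icen-y'$ as needed, we may assume $\Icen\le y<y'$ (the ordering is then forced by $\|y-\Icen\|<\|y'-\Icen\|$) and, after a shift of coordinates, $0\le y<y'$ with $\leafSet$ a grid of spacing $\delta$, symmetric about $0$, spanning $[-L_C,L_C]$.

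The core step is to reflect about the midpoint $N=(y+y')/2$. For $z\in\leafSet$ put $z^{\dagger}=y+y'-z$; then $|y-z^{\dagger}|=|y'-z|$ and $|y'-z^{\dagger}|=|y-z|$, so in $g(y)-g(y')=\sum_{z\in\leafSet}[f(|y-z|)-f(|y'-z|)]$ the contributions of $z$ and $z^{\dagger}$ cancel whenever both lie in $\leafSet$. One checks that $\{z\in\leafSet:z^{\dagger}\in\leafSet\}$ is closed under $\dagger$, and that its only possible fixed point $z=N$ contributes $0$; hence these terms cancel in pairs, leaving
\[ g(y)-g(y')=\sum_{z\in\leafSet,\ z^{\dagger}\notin\leafSet}\bigl[f(|y-z|)-f(|y'-z|)\bigr]. \]
Since $y,y',z$ all lie on the arithmetic grid, $z^{\dagger}$ is again a grid point, so $z^{\dagger}\notin\leafSet$ exactly when $|z^{\dagger}|>L_C$; using $0\le y,y'\le L_C$ this reduces to $z<y+y'-L_C$, and such $z$ exist (e.g. $z=-L_C$, as $y'>0$). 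For every such $z$ we have $z<y+y'-L_C\le y<y'$, hence $|y-z|=y-z<y'-z=|y'-z|$ and therefore $f(|y-z|)\ge f(|y'-z|)$ because $f$ is decreasing, the inequality being strict for $z=-L_C$ (namely $f(y+L_C)>f(y'+L_C)$). Summing the leftover terms gives $g(y)-g(y')>0$, which is the claim.

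The main obstacle is not the cancellation bookkeeping but pinning down strictness: the argument requires $f$ to be \emph{strictly} decreasing on the range of distances that occur — if $f$ were merely weakly decreasing and flat there, $g$ could be locally constant and the strict ordering would genuinely fail — so I would either invoke the standing convention that $f$ is strictly decreasing or state this explicitly. A secondary point is the parity of $K$, i.e. whether $\Icen\in\leafSet$: the reflection step is uniform in the parity, but the reduction to $0\le y<y'$ must be phrased so that the distances $\|y-\Icen\|,\|y'-\Icen\|$ — which take values in $\{\|z-\Icen\|:z\in\leafSet\}$ — are compared on the same side of $\Icen$.
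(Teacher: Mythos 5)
Your proof is correct, but it takes a genuinely different route from the paper's. The paper's own argument is little more than an assertion: after writing the difference as $\sum_{z} f(\|z-y\|)-\sum_{z} f(\|z-y'\|)$ it declares that the claim ``follows immediately'' from $y$ being closer to the center and $f$ being decreasing and concave, without exhibiting the cancellation that makes this true. You actually carry that step out, via the involution $z\mapsto y+y'-z$ about the midpoint of $y$ and $y'$: the paired terms cancel exactly, the only surviving terms come from grid points near the far boundary whose reflections fall outside $[I_0-L_C,\,I_0+L_C]$, and each of those terms is nonnegative (strictly positive at $z=I_0-L_C$) by monotonicity of $f$ alone. This buys two things the paper's sketch does not: a complete proof, and the observation that concavity of $f$ is not needed for this lemma --- only (strict) monotonicity on the range of realized distances. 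Your reduction to the case $I_0\le y<y'$ via the reflection symmetry of the grid is also necessary and correctly handled; the lemma as stated compares agents on possibly opposite sides of $I_0$, and the paper silently skips this. The one caveat you raise --- that ``decreasing'' must mean strictly decreasing (at least somewhere on $[L_C, 2L_C]$) for the strict inequality to hold --- is a real gap in the model's hypotheses, but it afflicts the paper's own proof equally, so it is a critique of the statement rather than of your argument.
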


\begin{proof}
	Let $y, y'$ be two periphery agents given in the Lemma. Let's consider the difference between $\sum_{z \in \Clw{y}} p(y|z)$ and $\sum_{z \in \Clw{y'}} p(y'|z)$
	by definition we have that 
	\begin{equation*}
	\begin{aligned}
	&\sum_{z \in \Clw{y}} p(y|z) - \sum_{z \in \Clw{y'}} p(y'|z)\\
	=&\Bsbl \sum_{z \in C_p \backslash \{y\}} f(||z-y||) -   \sum_{z \in C_p \backslash \{y'\}} f(||z-y'||) \Bsbr
	\end{aligned}
	\end{equation*}

	Then the result of this lemma follow immediately from the facts that agent $y$ is closer to the center of community than agent $y'
	$ and $f(.)$ is a decreasing concave function, we have that

\end{proof}

In the next lemma we establish that if the periphery agents allocation has the property in the proposition, then the unique optimal respond of core agent has property in the proposition.

\begin{lemma}\label{lemma:core_respond}
	If the allocation of periphery agent $\muc_p$ has the property that
	for agents $y_1, y_2$ in the community such that
	$$\mu(y_c|y_1) > \mu(y_c|y_2)$$
	if 
	$$\|y_1 - \Icen \| < \|y_2 - \Icen \|$$
	then, we have
	$$\mu_c(y_1) > \mu_c(y_2)$$
\end{lemma}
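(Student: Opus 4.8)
The plan is to combine the strict concavity of the core agent's objective with a weighted version of the ranking inequality of Lemma~\ref{lemma:production_rank}. By Lemma~\ref{lemma:core_concavity} the core's objective $U(y_c|\muc_p,\coreRateVec)$ is concave on the compact convex set $\strategySpace_c$; for fixed $\muc_p$ it is in fact separable across the coordinates $\coreInteract{z}$, and each coordinate function is strictly concave on $[\threshold,\coreBudget]$ since its second derivative equals $g'(\mu)$ times a positive constant, where $g(\mu) = \frac{\alpha}{\mu^2}e^{-\alpha/\mu}$ and $g'(\mu) = \alpha\mu^{-4}e^{-\alpha/\mu}(\alpha-2\mu) < 0$ for $\mu > \alpha/2$, which holds throughout $[\threshold,\coreBudget]$ because $\threshold > \alpha$ (Assumption~\ref{ass:concave_assumption}). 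Hence the core's best response is unique and is characterized by the first-order conditions. Since we are in the regime $\coreBudget > m_c$ of Proposition~\ref{prop:sufficient_budget}, every $\coreInteractOpt{y}$ is positive, so the first-order condition is exactly~\eqref{eq:core_foc}: writing
$$W(y) = \sum_{z \in \Clw{y}} p(y|z)\, e^{-\frac{\alpha}{\mu(y_c|z)}},$$
there is a multiplier $\nu$ with $g(\coreInteractOpt{y})\,W(y) = \nu$ for all $y \in \leafSet$.

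The reduction is then immediate. Because $g$ is strictly decreasing on $[\threshold,\coreBudget]$ (the sign computation above), once we know that $W(y_1) > W(y_2) > 0$ whenever $\|y_1-\Icen\| < \|y_2-\Icen\|$, the equalities $g(\coreInteractOpt{y_1})W(y_1) = g(\coreInteractOpt{y_2})W(y_2) = \nu$ force $g(\coreInteractOpt{y_1}) < g(\coreInteractOpt{y_2})$, and strict monotonicity of $g$ gives $\coreInteractOpt{y_1} > \coreInteractOpt{y_2}$. Together with uniqueness of the best response this is precisely the claim, so everything reduces to the weighted ranking $W(y_1) > W(y_2)$.

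That ranking is the heart of the proof and the step I expect to be the main obstacle. It is a refinement of Lemma~\ref{lemma:production_rank} in which each term $p(y|z) = f(\|y-z\|)$ carries the weight $w(z) = e^{-\alpha/\mu(y_c|z)}$, which by the hypothesis on $\muc_p$ is decreasing in $\|z-\Icen\|$, hence symmetric about $\Icen$ on the uniform grid $\leafSet$. I would prove it by the reflection device underlying Lemma~\ref{lemma:production_rank}: pairing each grid point $z$ with its mirror image $z^m = 2\Icen - z$, the fact that $f$ is decreasing and concave makes $a \mapsto f(\|a-z\|)+f(\|a-z^m\|)$ non-increasing in $\|a-\Icen\|$, so the \emph{full} weighted sum $\widetilde W(y) := \sum_{z\in\leafSet} w(z)\, f(\|y-z\|)$ is strictly decreasing in $\|y-\Icen\|$. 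Since $W(y) = \widetilde W(y) - f(0)\,w(y)$, the remaining work — and the delicate point — is the diagonal correction: the more central agent $y_1$ carries the larger weight $w(y_1)$, so subtracting $f(0)w(y)$ removes more from $W(y_1)$ than from $W(y_2)$, and one must show the gap $\widetilde W(y_1)-\widetilde W(y_2)$ still dominates $f(0)(w(y_1)-w(y_2))$. I would handle this by working neighbour-to-neighbour along the grid from $y_2$ toward $y_1$, using the uniform bounds $w(z)\in[c,1)$ (which follow from $\mu(y_c|z)\geq\threshold$ and $e^{-\alpha/\threshold}=c$) together with the strict concavity of $f$ to show that each one-step decrease of $\widetilde W$ strictly exceeds the corresponding one-step change of the diagonal term, and then summing these inequalities to conclude $W(y_1) > W(y_2)$.
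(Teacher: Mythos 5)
Your reduction is exactly the paper's: concavity of the core's objective, the first-order condition $g(\mu_c(y))W(y)=\nu$ with $g(\mu)=\frac{\alpha}{\mu^2}e^{-\alpha/\mu}$ strictly decreasing on $[\threshold,\coreBudget]$ (since $\threshold>\alpha>\alpha/2$), and hence the whole lemma rests on the weighted ranking $W(y_1)>W(y_2)$. Where you differ is that you correctly recognize that this weighted ranking is \emph{not} an immediate consequence of Lemma~\ref{lemma:production_rank}; the paper simply asserts that it ``follows immediately'' from the unweighted inequality, which it does not, because the weights $w(z)=e^{-\alpha/\mu(y_c|z)}$ vary with $z$ and the diagonal term $f(0)w(y)$ is subtracted differently for $y_1$ and $y_2$. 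You have located the real difficulty that the paper's proof skips.

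However, your proposed completion of that step is a sketch whose key claim can fail. The domination you need --- that $\widetilde W(y_1)-\widetilde W(y_2)$ exceeds $f(0)\,(w(y_1)-w(y_2))$ --- is not implied by the hypotheses. Concretely, take three equally spaced agents $\{\Icen-L_C,\Icen,\Icen+L_C\}$ with weights $w(\Icen)$ close to $1$ and $w(\Icen\pm L_C)$ close to the lower bound $c$; this profile is admissible (it satisfies the monotonicity hypothesis of the lemma and $w(z)\in[c,1)$). Then
\begin{equation*}
W(\Icen)-W(\Icen+L_C)\;=\;\bigl(2w(\Icen-L_C)... \bigr)
\end{equation*}
wait --- written out, $W(\Icen)=2c\,f(L_C)$ while $W(\Icen+L_C)\approx c\,f(2L_C)+1\cdot f(L_C)$, so $W(\Icen)-W(\Icen+L_C)\approx(2c-1)f(L_C)-c\,f(2L_C)$, which is negative whenever $c$ is near $e^{-1}$. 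So the central agent can have the \emph{smaller} weighted value, the neighbour-to-neighbour summation cannot close, and the claimed symmetry of the weights about $\Icen$ also does not follow from the stated hypothesis (which is a monotonicity condition only, silent on pairs equidistant from $\Icen$). To salvage the argument you would need additional hypotheses controlling the spread of the weights (e.g.\ a bound on $w(y_1)-w(y_2)$ relative to the unweighted gap of Lemma~\ref{lemma:production_rank}, or a large-$K$ regime in which the diagonal term is negligible). Note that this is not a defect of your write-up alone: the paper's own proof has the same hole, only hidden behind the phrase ``follows immediately.''
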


\begin{proof}
	Let $y_1$, $y_2$ be two periphery agents such that 
	$$\|y_1-\Icen \| < \|y_2-\Icen \|$$
	To show the result of this lemma, it is enough to show the following
	\begin{equation*}
	\sum_{z \in \Clw{y_1}} p(y_1|z)e^{\frac{-\alpha}{\mu(y_c|z)}}> \sum_{z \in \Clw{y_2}} p(y_2|z)e^{\frac{-\alpha}{\mu(y_c|z)}}
	\end{equation*}

	then, first order condition condition given in Eq.~\eqref{eq:core_foc} would imply the optimal allocation has 
	$$\mu_c(y_1) > \mu_c
	(y_2)$$
	
	The result above follows immediately from from  Lemma~\ref{lemma:production_rank}.

\end{proof}

\subsection{Periphery Agent Allocation}
Here, we established that the periphery agent optimization problem is concave under Assumption \ref{ass:concave_assumption}, and structural property of periphery agent's optimal respond when core agent's allocation has property described in the proposition. We will use these result to show that if the allocation of core agents satisfy certain property, then it is uniquely optimal for the core agent to respond in a way with structure described in the Proposition \ref{prop:stable_allocation}.
Let's first recall that the utility of periphery agent is given as follows
\begin{equation*}
\begin{aligned}
U_p(y|\mu_c,\mu_p(y))& =
 \sum_{z \in \Clw{y}} \delayCoreUtility{z}{y}  \\
&+ \sum_{z \in \Clw{y}} \delayUtility{z}{y} \\
&+ \alternative{\lambda(y)}.
\end{aligned}
\end{equation*}
The optimization problem of periphery agent is given as
$$\max_{\mu(y)} U_p(y|\mu_c,\mu_p(y))$$
subject to
$$ \sum_{z \in \Clw{y}} \mu(z|y)+ \mu(y_c|y) + \lambda(y) \leq \leafBudget$$
$$\mu(z|y), \lambda(y), \mu(y_c|y) \geq 0.$$

To simplify the notation, let's define
\begin{equation*}
\begin{aligned}
	& \coreUtility{y} =\\
	& \sum_{z \in \Clw{y}} \delayCoreUtility{z}{y}
\end{aligned}
\end{equation*}

The next lemma show that the objective function of periphery agent is concave
\begin{lemma}\label{lemma:leaf_concavity}
	Under Assumption \ref{ass:concave_assumption}, the optimization problems that characterize core periphery agent's allocation is concave.
\end{lemma}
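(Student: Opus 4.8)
The plan is to mirror the argument already used for the core agent in Lemma~\ref{lemma:core_concavity}, reducing the statement to the concavity facts established in Lemma~\ref{lemma:potential_game_convexity}. First I would rewrite the periphery utility $U_p(y|\mu_c,\mu_p(y))$ as a sum of terms drawn from the three families $f_1(\mu_c(z),\mu(y_c|y))$, $f_2(\mu(z|y))$ and $f_3(\lambda(y))$ introduced in Eq.~\eqref{eq:core_term}, Eq.~\eqref{eq:leaf_term} and Eq.~\eqref{eq:alternative_term}: the content received through the core contributes $\coreUtility{y}=\sum_{z\in\Clw{y}} f_1(\mu_c(z),\mu(y_c|y))$, the directly followed periphery agents contribute $\sum_{z\in\Clw{y}} f_2(\mu(z|y))$, and the outside platforms contribute $f_3(\lambda(y))$, so that $U_p$ is a finite sum of such terms.

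The key observation is that, from periphery agent $y$'s point of view, the quantities $\mu_c(z)$ are fixed constants, since they are the core agent's decisions; hence each summand of the first family is the restriction of $f_1$ to the coordinate slice $\{\mu_c(z)=\mathrm{const}\}$, and the restriction of a concave function to an affine subset is concave — here in fact a strictly concave function of the single variable $\mu(y_c|y)$. By Lemma~\ref{lemma:potential_game_convexity}, $f_1$, $f_2$ and $f_3$ are strictly concave on the strategy space, so every summand of $U_p$ is strictly concave in the one decision variable it depends on. Since the decision variables $\{\mu(z|y)\}_{z\in\Clw{y}}$, $\mu(y_c|y)$ and $\lambda(y)$ are all distinct, $U_p$ is a separable sum of strictly concave one–dimensional functions, hence strictly concave on $\strategySpace_p$; moreover each variable receives at least one strictly concave contribution ($\mu(y_c|y)$ through the $f_1$–terms, the remaining variables through a single $f_2$– or $f_3$–term). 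Finally I would record that the feasible region — the set of $\mu_p(y)$ satisfying $\mu(\core|y)+\lambda(y)+\sum_{z\in\Clw{y}}\mu(z|y)\le\leafBudget$ together with nonnegativity, intersected with $\strategySpace_p$ — is convex and compact, exactly as in Appendix~\ref{proof:existence}, so the periphery agent's problem is a concave maximization over a convex compact set.

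There is no genuine obstacle here; the one point that needs a little care (rather than being the ``hard part'') is that the admissible rate set $A=\{0\}\cup[\threshold,\leafBudget]$ is not itself convex, so ``concave'' must be understood, as in Lemma~\ref{lemma:potential_game_convexity}, on each of its convex pieces, with the boundary alternatives $\mu(z|y)=0$ versus $\mu(z|y)\ge\threshold$ (and likewise for $\mu(y_c|y)$ and $\lambda(y)$) disposed of by exactly the same case analysis already carried out there. No new computation is required beyond invoking the second–derivative estimates of Lemma~\ref{lemma:potential_game_convexity}.
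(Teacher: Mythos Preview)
Your proposal is correct and follows essentially the same approach as the paper: rewrite $U_p(y|\mu_c,\mu_p(y))$ as $\sum_{z\in\Clw{y}}\bigl[f_1(\mu_c(z),\mu(y_c|y))+f_2(\mu(z|y))\bigr]+f_3(\lambda(y))$, invoke the concavity of $f_1,f_2,f_3$ already established in Lemma~\ref{lemma:potential_game_convexity}, and conclude that a sum of concave functions is concave. Your write-up is in fact more careful than the paper's (you note explicitly that $\mu_c(z)$ is fixed for agent $y$, record separability, and flag the non-convexity of $A$), but the underlying argument is identical.
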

\begin{proof}
	
	Recall that $U_p(y|\mu_c,\mu(y))$ is the objective function in the optimization problem above, and it can be rewritten as 
	\begin{equation*}
	\begin{aligned}
	&U_p(y|\mu_c,\mu(y)) = \\
	& \sum_{z \in \Clw{y}}  \Bsbl f_1(\mu_c(z), \mu(y_c|y)) + f_2(\mu(z|y)) \Bsbr   + f_3(\lambda(y)),
	\end{aligned}
	\end{equation*}
	
	where $f_1(\mu_c(z), \mu(y_c|y))$ is given in Eq.~\eqref{eq:core_term}, $f_2(\mu(z|y))$ is given in Eq.~\eqref{eq:leaf_term} and $f_3(\lambda(y))$ is given in Eq.~\eqref{eq:alternative_term}.  We proved in Proposition~\ref{prop:stable_allocation} that $f_1(\mu_c(z), \mu(y_c|y))$, $f_2(\mu(z|y))$, and $f_3(\lambda(y))$ are indeed concave under Assumption~\ref{ass:concave_assumption}. Therefore $U(y_c|\muc_p,\coreRateVec)$ is sum of concave function, and therefore concave. This completes the proof of this lemma.
\end{proof}

Similarly, the optimal allocation of periphery agent has to satisfy the first order optimal condition, and the partial derivative of positive allocation are computed as following
$$\frac{d\leafOptUtility{y}}{d\mu(y_c|y)}=\frac{\alpha r_p}{\mu^2(y_c|y)}e^{ - \frac{\alpha}{\mu(y_c|y)}}  \sum_{z \in \Clw{y}}  p(z|y) e^{  - \frac{\alpha}{\mu_c(z)} },$$
and
$$\frac{d\leafOptUtility{y}}{d\mu(z|y)} = \frac{\alpha r_p}{\mu^2(z|y)}e^{ - \frac{\alpha}{\mu(z|y)}}   p(z|y),$$
as well as
$$\frac{d\leafOptUtility{y}}{d\lambda(y)}=\firstOrderAlternative{\lambda(y)}.$$

By first order condition, we have 
\begin{equation}\label{eq:leaf_foc}
\begin{aligned}
\frac{d\leafOptUtility{y}}{d\mu(y_c|y)} 
= \frac{d\leafOptUtility{y}}{d\lambda(y)}
= \frac{d\leafOptUtility{y}}{d\mu(z|y)}, \quad \forall \mu(y_c|y), \mu(z|y), \lambda(y)> 0.
\end{aligned}
\end{equation}

In the next lemma, we establish that if periphery agent close to the center of community, then he is going to be more interested in the content of the community.

\begin{lemma}\label{lemma:benefit_rank}
	If $y,y' \in C_p$ are two periphery agents such that
	$$||y - \Cc || < ||y' - \Cc ||$$
	then we have that
	$$\sum_{z \in C_p \backslash \{y\}} p(z|y)> \sum_{z \in C_p \backslash \{y'\}} p(z|y').$$
\end{lemma}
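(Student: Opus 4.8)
The plan is to reduce the statement to a one–dimensional monotonicity fact about the function $g(w) = \sum_{z \in \leafSet} f(\|z-w\|)$ and then prove that fact by a short telescoping computation. Since $p(x|y) = f(\|x-y\|) = f(\|y-x\|) = p(y|x)$, we have $\sum_{z \in \Clw{y}} p(z|y) = g(y) - f(0)$, so the claim is exactly that $g(y) > g(y')$ whenever $\|y-\Icen\| < \|y'-\Icen\|$; note this also makes the lemma identical to Lemma~\ref{lemma:production_rank}, but I give a self-contained argument. Because $y_1 = \Icen - L_C$, $y_K = \Icen + L_C$ and the agents are equally spaced with gap $\delta$, the set $\leafSet$ is invariant under the reflection $z \mapsto 2\Icen - z$; this reflection preserves distances to $\Icen$ and also preserves $g$ (re-index the sum by $z' = 2\Icen - z$). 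Hence, without loss of generality, I may assume both agents lie on the same side of $\Icen$, say $\Icen \le y = y_j < y' = y_{j'}$ with $(K+1)/2 \le j < j' \le K$, and then write $g(y)-g(y') = \sum_{m=j}^{j'-1}\bigl(g(y_m)-g(y_{m+1})\bigr)$, so that it suffices to show each single-step difference $g(y_m)-g(y_{m+1})$ is strictly positive when $\Icen \le y_m$.

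For the single step, using $\|y_k-y_m\| = |k-m|\delta$ and substituting $i = k-m$,
\begin{equation*}
g(y_{m+1}) - g(y_m) = \sum_{k=1}^{K}\bigl(f(|k-m-1|\delta) - f(|k-m|\delta)\bigr) = \sum_{i=1-m}^{K-m}\bigl(f(|i-1|\delta) - f(|i|\delta)\bigr).
\end{equation*}
I split this at $i=0$: for $i \ge 1$ the terms telescope to $f(0) - f((K-m)\delta)$, and for $i \le 0$, substituting $i = -\ell$, they telescope to $f(m\delta) - f(0)$. Adding the two pieces,
\begin{equation*}
g(y_{m+1}) - g(y_m) = f(m\delta) - f\bigl((K-m)\delta\bigr).
\end{equation*}

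Now $\Icen \le y_m$ is equivalent to $2m \ge K+1$, i.e. $m\delta > (K-m)\delta \ge 0$, so since $f$ is (strictly) decreasing we get $f(m\delta) - f((K-m)\delta) < 0$, hence $g(y_m) > g(y_{m+1})$; summing the consecutive steps gives $g(y) > g(y')$, as required. The argument is essentially routine: the only points needing care are the symmetry reduction (to bring the two agents onto the same side of $\Icen$), the re-indexing in the telescoping sum, and the use of strict monotonicity of $f$ for the strict inequality. Concavity of $f$, which the statement also hypothesizes, is not actually needed here, though it gives an alternative derivation: $w\mapsto f(\|z-w\|)$ is concave, being a decreasing concave function composed with the convex map $w\mapsto\|z-w\|$, so $g$ is concave and symmetric about $\Icen$ and hence strictly unimodal there. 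I do not anticipate any genuine obstacle beyond this bookkeeping.
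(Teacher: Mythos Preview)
Your argument is correct. The reduction to $g(w)=\sum_{z\in\leafSet}f(\|z-w\|)$ is clean, the reflection symmetry of $\leafSet$ about $\Icen$ justifies the WLOG step, and the telescoping computation
\[
g(y_{m+1})-g(y_m)=f(m\delta)-f((K-m)\delta)
\]
is exactly right; for $y_m\ge\Icen$ one has $m>K-m$ and strict monotonicity of $f$ finishes it. Your observation that the lemma coincides with Lemma~\ref{lemma:production_rank} via $p(z|y)=p(y|z)$ is also correct.

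Compared with the paper, however, you are doing strictly more work. The paper's proof of this lemma (and of Lemma~\ref{lemma:production_rank}) is a one-line appeal: it simply asserts that the inequality ``follows immediately'' from $y$ being closer to the centre together with $f$ being decreasing and concave, without writing down any of the computation. Your route is genuinely different in two respects: first, you actually carry out the argument rather than assert it; second, your telescoping shows that \emph{only strict monotonicity} of $f$ is needed, whereas the paper invokes concavity as well. What your approach buys is a sharper hypothesis and an honest proof; what the paper's approach buys is brevity. The alternative you sketch at the end---that $g$ is concave (as a sum of compositions of the decreasing concave $f$ with the convex $w\mapsto\|z-w\|$) and symmetric about $\Icen$, hence unimodal---is presumably closer to what the authors had in mind, though they do not spell it out either.
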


\begin{proof}
	Similar to Lemma~\ref{lemma:production_rank}, the reuslt of this lemma follows immediately from the facts that $y$is closer to the center of community than $y'$and that $p(.|.)$ is concave and decreasing.

\end{proof}

In next lemma, we establish that if the core agents allocation has the property in the proposition, then the unique optimal respond for the periphery agent has property in the proposition.
\begin{lemma}\label{lemma:periphery_respond}
	If the allocation of core agent $y_c$ has the property that
	for agents $y_1, y_2$ in the community such that
	$$\mu_c(y_1) > \mu_c(y_2)$$
	if 
	$$\|y_1 - \Icen \| < \|y_2 - \Icen \|$$
	then, we have
	$$\mu(y_c|y_1) > \mu(y_c|y_2)$$
\end{lemma}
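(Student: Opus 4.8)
The plan is to mirror the structure of Lemma~\ref{lemma:core_respond}, but now for the periphery agent's decision problem, using the first-order conditions established in Eq.~\eqref{eq:leaf_foc} together with the concavity result of Lemma~\ref{lemma:leaf_concavity}. Fix two periphery agents $y_1, y_2$ with $\|y_1 - \Icen\| < \|y_2 - \Icen\|$, and assume the core agent's allocation satisfies $\mu_c(y_1) > \mu_c(y_2)$. Since the periphery agents' utility $U_p$ is concave and the optimum is interior in the relevant coordinates (by Proposition~\ref{prop:sufficient_budget}, $\mu^*(y_c|y) > 0$ for all $y$), the optimal allocation of each agent is characterized by the first-order condition Eq.~\eqref{eq:leaf_foc}: the partial derivative with respect to $\mu(y_c|y)$ must equal the common marginal value (the one shared with $\lambda(y)$ and all positively-allocated $\mu(z|y)$ terms).

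The key step is to compare the ``gain terms'' that multiply the core-following derivative for $y_1$ versus $y_2$. From the formula
$$\frac{d\leafOptUtility{y}}{d\mu(y_c|y)}=\frac{\alpha r_p}{\mu^2(y_c|y)}e^{ - \frac{\alpha}{\mu(y_c|y)}}  \sum_{z \in \Clw{y}}  p(z|y) e^{  - \frac{\alpha}{\mu_c(z)} },$$
I would isolate the agent-dependent factor $G(y) := \sum_{z \in \Clw{y}}  p(z|y) e^{ - \alpha/\mu_c(z) }$. I need to show $G(y_1) > G(y_2)$. This follows by combining two facts: first, Lemma~\ref{lemma:benefit_rank}, which gives $\sum_{z \in \Clw{y_1}} p(z|y_1) > \sum_{z \in \Clw{y_2}} p(z|y_2)$ because $y_1$ is closer to the center and $p(\cdot|\cdot)$ is concave decreasing; and second, the hypothesis $\mu_c(y_1) > \mu_c(y_2)$ combined with the structural ordering of $\mu_c$ over periphery agents — the core allocates more to agents closer to the center — so that the weights $e^{-\alpha/\mu_c(z)}$ are, in an appropriate coupled sense, larger near the center as well. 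Concretely, because both the interest profile $p(\cdot|y_1)$ is a ``shifted-inward'' version of $p(\cdot|y_2)$ and the weight vector $e^{-\alpha/\mu_c(\cdot)}$ is monotone with respect to distance to $\Icen$, a rearrangement/coupling argument shows the weighted sum $G(y_1)$ dominates $G(y_2)$.

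Once $G(y_1) > G(y_2)$ is established, the conclusion $\mu^*(y_c|y_1) > \mu^*(y_c|y_2)$ follows from the first-order condition exactly as in the proof of Lemma~\ref{lemma:core_respond}: the function $x \mapsto \frac{\alpha r_p}{x^2}e^{-\alpha/x}$ is strictly monotone on the relevant range $[\threshold, \infty)$ (it is decreasing past its maximum at $x=\alpha/2$, and $\threshold > \alpha$ by Assumption~\ref{ass:concave_assumption}), so equality of the two marginal values $\frac{\alpha r_p}{\mu^2(y_c|y_i)}e^{-\alpha/\mu(y_c|y_i)} G(y_i)$ to the same common constant — which holds because both agents face the same budget structure and, crucially, the same common marginal value from their outside option $\lambda(y)$ via Eq.~\eqref{eq:leaf_foc} — forces the agent with the larger $G(y_i)$ to have the larger $\mu(y_c|y_i)$. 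I expect the main obstacle to be the rigorous justification of $G(y_1) > G(y_2)$: this requires carefully pairing up the summands over $\Clw{y_1}$ and $\Clw{y_2}$ (the index sets differ by one element) and exploiting both the spatial monotonicity of $p$ and the inductive structural hypothesis on $\mu_c$ simultaneously, rather than either one alone; a clean way is to set up the iterative update process alluded to in the opening of Appendix~\ref{proof:structure}, where at each iteration one side's ordering is assumed and the other is derived, and verify the monotonicity of $G$ is preserved.
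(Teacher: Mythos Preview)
Your approach is essentially the paper's: isolate the gain term $G(y)=\sum_{z\in\Clw{y}} p(z|y)e^{-\alpha/\mu_c(z)}$ (the paper writes it as $S(y_c|y)$), argue $G(y_1)>G(y_2)$, and then read off the ordering on $\mu(y_c|y_i)$ from the first-order condition Eq.~\eqref{eq:leaf_foc}. The paper is actually terser than you---it simply says the inequality ``follows immediately from Lemma~\ref{lemma:benefit_rank}''---whereas you correctly flag that the weights $e^{-\alpha/\mu_c(z)}$ require combining Lemma~\ref{lemma:benefit_rank} with the structural hypothesis on $\mu_c$ via a pairing/rearrangement argument.

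One genuine gap to watch: your last step asserts that the two marginal values equal the \emph{same} common constant ``because both agents face the same budget structure and the same common marginal value from their outside option.'' This is not correct as written. Agents $y_1$ and $y_2$ solve \emph{separate} optimization problems, each with its own Lagrange multiplier $\nu(y_i)$; the FOC for $\lambda(y_i)$ only pins $\nu(y_i)$ to $\lambda(y_i)$, not to the other agent's multiplier. Since $y_1$ (being more central) also has larger direct-interest terms $p(z|y_1)$, one would expect $\nu(y_1)\neq\nu(y_2)$ in general, and the inference $G(y_1)>G(y_2)\Rightarrow \mu(y_c|y_1)>\mu(y_c|y_2)$ does not follow from FOC alone without controlling this. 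The paper's proof is equally silent here---it just asserts ``first order condition would imply''---so you have matched its level of rigor; but if you want an airtight argument you will need either to compare the multipliers directly or, as you hint at the end, to embed the step in the iterative best-response process of Appendix~\ref{proof:structure} and argue the ordering is preserved at each round.
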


\begin{proof}
	Let $y_1$, $y_2$ be two arbitrary agents such that 
	$$\|y_1-\Icen \| < \|y_2-\Icen \|$$
	To show the result of this lemma, it is enough to show the following
	\begin{equation*}
		S(y_c|y_1) > S(y_c|y_2)
	\end{equation*}
	or equivalently
	\begin{equation}\label{eq:sufficient_condition}
	\sum_{z \in \comSet \backslash \{y_1\}} p(z|y_1)e^{\frac{-\alpha}{\mu(z|y_c)}}> \sum_{z \in \comSet \backslash \{y_2\}} p(z|y_2)e^{\frac{-\alpha}{\mu(z|y_c)}}
	\end{equation}
	then, first order condition condition as given in Eq.~\eqref{eq:leaf_foc}would imply the optimal allocation has 
	$$\mu(y_c|y_1) > \mu(y_c|y_1)$$
	
	Therefore, it remains for to show that 
	\begin{equation*}
	S(y_c|y_1) > S(y_c|y_2)
	\end{equation*}
	
	The result follows immediately from Lemma \ref{lemma:benefit_rank}.

\end{proof}

\subsection{Proof of Proposition \ref{prop:structure}}
Here, we show that an iterative update process, that converge to the unique stable allocation, has the structural properties given in the proposition would preserve in each iteration. Therefore, the stable allocation in the end has the same structural property. 
\begin{proof}
	Let's consider the initial allocation of core agent in the following configuration:
	\begin{equation}
	\mu_{c,0}(y_c) = \{\mu_c(y) = \mu_0\}_{y \in C_p}
	\end{equation}

	Given two periphery agents $y_1,y_2$ such that 
	$$\|y_1-\Icen \| < \|y_2-\Icen \|$$
	the difference between the utility $y_1, y_2$ would benefit from the content of core agent is given by 
	
	\begin{equation}
	\begin{aligned}
	&\sum_{z \in C_p \backslash \{y_1\}} p(z|y_1)e^{\frac{-\alpha}{u_0}} -  \sum_{z \in C_p \backslash \{y_2\}} p(z|y_2)e^{\frac{-\alpha}{u_0}} \\
	& = e^{\frac{-\alpha}{u_0}} \Bsbl \sum_{z \in C_p \backslash \{y_1\}}  p(z|y_1) - \sum_{z \in C_p \backslash \{y_2\}} p(z|y_2) \Bsbr \\
	\end{aligned}
	\end{equation}
	By Lemma~\ref{lemma:production_rank}, we have that 
	$$\sum_{z \in C_p \backslash \{y_1\}}  p(z|y_1) > \sum_{z \in C_p \backslash \{y_2\}} p(z|y_2).$$
	It follows that 
	$$\sum_{z \in C_p \backslash \{y_1\}} p(z|y_1)e^{\frac{-\alpha}{u_0}} -  \sum_{z \in C_p \backslash \{y_2\}} p(z|y_2)e^{\frac{-\alpha}{u_0}} > 0.$$
	Therefore, under the allocation of $\mu_{c,0}(y_c)$, to satisfy the first order condition given in Eq. \eqref{eq:leaf_foc}, the respond $\muc_{l,0}$ of periphery agents would have the following property.
	\begin{equation*}
		\mu(y_c|y_1) > \mu(y_c|y_2)
	\end{equation*}
	
	In Lemma \ref{lemma:core_respond}, we show that if the periphery agents allocation have structure above, it is uniquely optimal for the core agent to have have the structure property as follows.
	\begin{equation*}
		\mu_c(y_1) > \mu_c(y_2)
	\end{equation*}

	Therefore, in the next iterated update of core agent's allocation, $\mu^1(y_c)$, we have 
	\begin{equation*}
		\mu_c(y_1) > \mu_c(y_2)
	\end{equation*}

	In Lemma \ref{lemma:periphery_respond}, we show that if the core agent allocation has structure above, then the structure of periphery agents' allocation would preserved,i.e, 
	\begin{equation*}
	\mu(y_c|y_1) > \mu(y_c|y_2)
	\end{equation*}

	Now it remains to show that at each iteration, the total utility is monotonic increasing. Notice that the iteration process is initiated by core agent and by formulation of his objective function, the core agent would only change allocation if the new allocation would increase the total utility of community. This means that the total utility would increase after core agent's update,i.e,
	\begin{equation*}
	G((\mu^{k+1}(y_c),\muc^k_p)) \geq G((\mu^{k}(y_c),\muc^k_p))
	\end{equation*}
	
	Again, by formulation of the periphery agent's objective function, the periphery agent would only change allocation if the new allocation would increase his own utility and such change would not effect other periphery agents' allocation.
	
	\begin{equation*}
	G((\mu^{k+1}(y_c),\muc^{k+1}_p)) \geq G((\mu^{k+1}(y_c),\muc^k_p))
	\end{equation*}
	
	These means, after each update, the utility is monotonic increasing (either increase or stay the same).
	\begin{equation*}
		G(\muc^{k+1}) \geq G(\muc^{k})  
	\end{equation*}
	
	As we show in Lemma \ref{lemma:potential_game_convexity}, the $G(\muc)$ is concave. This means it has a unique optimal allocation $\muc^*$.
	
	Therefore, combing the result above, we get that 
	\begin{equation*}
		\lim_{k \rightarrow \infty}G(\muc^{k}) = G(\muc^*)  
	\end{equation*}
	
	Therefore, we get that the stable allocation(Nash Equilibrium) of core-periphery interaction has structure in the proposition.

\end{proof}

\end{document}